\newcommand{\etal}{\textit{et al.}\ }
\newcommand{\nd}{\mathcal{N}}
\newcommand{\psiu}{\psi_\mathbb{U}}
\newtheorem{lemma}{Lemma}
\newtheorem{theorem}[lemma]{Theorem}
\newtheorem{claim}[lemma]{Claim}
\newcommand{\AT}[1]{\marginpar{\parbox{3cm}{{\small {\bf AT:} #1}}}} %Antonis
\title{The Niceness of Unique Sink Orientations}
\author[1]{Bernd G\"artner \thanks{gaertner@inf.ethz.ch}}
\author[1]{Antonis Thomas\thanks{athomas@inf.ethz.ch}}
\affil[1]{Department of Computer Science\\
	Institute of Theoretical Computer Science, ETH Z\"urich \\
	8092 Z\"urich, Switzerland}
\date{}
\begin{document}

\maketitle

\begin{abstract}
Random Edge is the most natural randomized pivot rule for the simplex algorithm.
Considerable progress has been made recently towards fully understanding its behavior.
Back in 2001, Welzl introduced the concepts of \emph{reachmaps} and \emph{niceness}
of Unique Sink Orientations (USO), in an effort to better understand the behavior of Random Edge.
In this paper, we initiate the systematic study of these concepts. We settle the questions
that were asked by Welzl about the niceness of (acyclic) USO. 
Niceness implies natural upper bounds for Random Edge and 
we provide evidence that these are tight or almost tight in many interesting cases.
Moreover, we show that Random Edge is polynomial on at least $n^{\Omega(2^n)}$ many 
(possibly cyclic)
USO. As a bonus, we describe a derandomization of Random Edge which achieves 
the same asymptotic upper bounds with respect to niceness
and discuss some algorithmic properties of the reachmap.

 \end{abstract}

\section{Introduction}
One of the most prominent open questions in the theory of optimization is whether
linear programs can be solved in strongly polynomial time. In
particular, it is open whether there exists a pivot rule for the
simplex method whose number of steps can be bounded by a polynomial
function of the number of variables and constraints. For most
deterministic pivot rules discussed in the literature, exponential
lower bounds are known. The first such bound was established for
Dantzig's rule by Klee and Minty in their seminal 1972
paper~\cite{kleeminty}; this triggered a number of similar results for
many other rules; only in 2011, Friedmann solved a longstanding
open problem by giving a superpolynomial lower bound for Zadeh's
rule~\cite{Friedmann11}. %\footnote{For this result, Friedmann
%  collected the \$1,000 prize offered by Zadeh in 1980; his PhD thesis
%  received the Tucker Prize in 2012.}

On the other hand, there exists a \emph{randomized} pivot rule,
called \emph{Random Facet}, with an
expected \emph{subexponential} number of steps in the worst case. This
bound was found independently by Kalai~\cite{kalai92} as well as
Matou\v{s}ek, Sharir and Welzl~\cite{msw96} in 1992. Interestingly,
the proofs employ only a small number of combinatorial properties of
linear programs. As a consequence, the subexponential upper bound for
the Random Facet pivot rule holds in a much more general 
abstract setting that encompasses many other (geometric) optimization
problems for which strongly polynomial algorithms are still
missing~\cite{msw96}.

This result sparked a lot of interest in abstract optimization
frameworks that generalize linear programming. The most studied such
framework, over the last 15 years, is that of \emph{unique sink
  orientations} (USO). First described by Stickney and Watson already
in 1978 as abstract models for P-matrix linear complementarity
problems (PLCPs)~\cite{StiWat}, USO were revived by Szab\'{o} and
Welzl in 2001~\cite{SW}. Subsequently, their structural and
algorithmic properties were studied extensively 
(\cite{ss04},\cite{ss05},\cite{matousekcount},\cite{GS},\cite{count},\cite{aoshima},\cite{hpz14},\cite{gt15},\cite{klaus},\cite{hz16}). 
In a nutshell, a USO is an orientation of the $n$-dimensional
hypercube graph, with the property that there is a unique sink in
every subgraph induced by a nonempty face. The algorithmic problem
associated to a USO is that of finding the unique global sink, in an
oracle model that allows us to query any given vertex for the
orientations of its incident edges.

In recent years, USO have in particular been looked at in connection
with another randomized pivot rule, namely \emph{Random Edge} (RE for short). This is 
arguably the most natural randomized pivot rule for the simplex
method, and it has an obvious interpretation also on USO: at every vertex pick 
an edge uniformly at random from the set of outgoing edges and let the other endpoint
of this edge be the next vertex. The path formed constitutes a \emph{random walk}.  
Ever since
the subexponential bound for Random Facet was proved in 1992,
researchers have tried to understand the performance of Random
Edge. This turned out to be very difficult, though. Unlike Random
Facet, the Random Edge algorithm is non-recursive, and tools for a
successful analysis were simply missing. A superexponential lower 
bound on cyclic USO was shown by Morris in
2002~\cite{morris}, but there was still hope that Random Edge might be
much faster on acyclic USO (AUSO).

Only in 2006, a superpolynomial and subexponential \emph{lower} bound for Random Edge on
AUSO was found by Matou\v{s}ek and Szab\'{o}~\cite{ms06}
and, very recently, pushed further by Hansen and Zwick \cite{hz16}. While these
are not lower bounds for actual linear programs, the results
demonstrate the usefulness of the USO framework: it is now clear
that the known combinatorial properties of linear programming are not
enough to show that Random Edge is fast. Note that, in 2011, Friedmann, Hansen
and Zwick proved a subexponential lower bound for Random
Edge on actual linear programs, ``killing'' yet another candidate for
a polynomial-time pivot rule~\cite{fhz11}. 

Still, the question remains open whether Random Edge also has a
subexponential \emph{upper} bound. As there already is a
subexponential algorithm, a positive answer would not be an
algorithmic breakthrough; however, as Random Edge is notoriously
difficult to analyze, it might be a breakthrough in terms of novel
techniques for analyzing this and other randomized algorithms. The
currently best upper bound on AUSO is an exponential improvement over
the previous (almost trivial) upper bounds, but the bound is still
exponential, $1.8^n$ \cite{hpz14}.

%%\paragraph*{Our contribution} heading removed to save space

In this paper, we initiate the systematic study of concepts that are
tailored to Random Edge on USO (not necessarily only AUSO). These
concepts --- \emph{reachmaps} and \emph{niceness} of USO --- were
introduced by Welzl \cite{emonice}, in a 2001 workshop as an
interesting research direction. At that time, it seemed more promising to work on algorithms other than Random Edge; hence, this research direction remained unexplored and the problems posed by Welzl remained open. 
Now that the understanding of Random Edge on USO has advanced a
lot %(in particular with respect to upper bounds), 
we hope that these ``old'' concepts will finally prove 
useful, probably in connection with other techniques.

The reachmap of a vertex is the set of all the coordinates
it can reach with a directed path, and a USO is $i$-nice if for every
vertex there is a directed path of length at most $i$ to another
vertex with smaller reachmap. Welzl pointed out that the concept of
niceness provides a natural upper bound for the Random Edge algorithm.
Furthermore, he asks the following question: ``Clearly every unique
sink orientation of dimension $n$ is $n$-nice.  Can we do better? In
particular what is the general niceness of acyclic unique sink
orientations?''

We settle these questions, in Section~\ref{sec:bounds}, by proving that for AUSO $(n-2)$-nice is tight, meaning that $(n-2)$ is an upper bound on the 
niceness of all AUSO and there are AUSO that are \emph{not} $(n-3)$-nice. For cyclic USO we argue that $n$-nice is tight.
In Section~\ref{sec:reachmap}, we give the relevant definitions and in Section~\ref{sec:REnice} we show an upper bound of $O(n^{i+1})$ for the number of steps 
RE takes on an $i$-nice USO. In addition, we describe a derandomization of RE which also
needs at most $O(n^{i+1})$ steps on an $i$-nice USO, thus matching the behavior of RE. 

Furthermore, we provide the following observations and results as applications for the concept of niceness.
In Section~\ref{sec:known}, we argue that RE can solve the AUSO instances that have been designed
  as lower bounds for other algorithms 
  (e.g. Random Facet \cite{matouseklbrandomfacet, G02} or  Bottom Antipodal \cite{ss05}) in polynomial time.
  In addition, we prove in Section~\ref{sec:count} that RE needs at most a
  quadratic number of  steps in at least $n^{\Theta(2^n)}$ many, possibly cyclic, USO. 
  The previous largest class of USO on which RE is polynomial (quadratic) is that 
  of decomposable USO; we include a proof that the number of those is $2^{\Theta(2^n)}$ and, thus, our new 
  result is a strict improvement. %We provide the 
  
  Finally, we provide an application for the concept of reachmap. 
  In Section~\ref{sec:FS}, we describe a new algorithm that is a variant of the Fibonacci Seesaw algorithm
  (originally by Szab\'o and Welzl \cite{SW}). The number of vertex evaluations it needs to solve a USO can 
  be bounded by a function that is exponential to the size of the reachmap of the starting vertex. 
  In contrast, the Fibonacci Seesaw needs a number of vertex evaluations that is exponential to 
  the dimension of the USO.

\section{Preliminaries} \label{sec:reachmap}
We use the notation $[n] = \{1,\ldots n\}$. 
Let $Q^n = 2^{[n]}$ be the set of vertices of the $n$-dimensional hypercube. 
A vertex of the hypercube $v \in Q^n$ is denoted by the set of coordinates 
it contains.  The symmetric difference of two vertices, denoted as 
$v \oplus u$ is the set of coordinates in which they differ. 
%Now, let $A \subseteq [n]$. With $Q^A$ we mean the cube over the coordinates defined in $A$;
%a $|A|$-dimensional cube. In this notation $Q^n$ could also be written as $Q^{[n]}$ but we prefer the former. 
%With $\bar{A}$ we denote the set $[n] \setminus A$, except when explicitly defined otherwise.  \AT{$\bar{A}$ notation may not be needed.}
%Similarly, given a vertex $u \in Q^n$, with $\bar{u}$ we mean the antipodal vertex; that is vertex $u'$ 
%such that $u \oplus u' = [n]$.  
Now, let $J \in 2^{[n]}$ and $v\in Q^n$. A \emph{face} of the hypercube, $F_{J,v}$, is defined as the set of vertices 
that are reached from $v$ over the coordinates defined by any subset of $J$,
i.e. $F_{J,v} = \{u\in Q^n | v\oplus u \subseteq  J\}$. The dimension of the face is $|J|$. 
We call edges the faces of dimension 1, e.g. $F_{\{j\},v}$, and vertices the faces of dimension 0. 
The faces of dimension $n-1$ are called facets. For $k\leq n$ we call a face of dimension $k$ a $k$-face.

Let $v,u\in Q^n$. 
By $|v\oplus u|$ we denote the Hamming distance (size of the symmetric difference) of $v$ and $u$. 
Given $v \in Q^n$, we define the neighborhood of $v$ as $\nd(v) = \{u\in Q^n | \text{  } |v\oplus u| = 1\}$.
Now,
let $\psi$ be an orientation of the edges of the $n$-dimensional hypercube. %\AT{Write what is a USO.}
Let $v,u\in Q^n$. The notation $v \xrightarrow{j} u$ (w.r.t $\psi$) means that $F_{\{j\},v} = \{v,u\}$ and 
that the corresponding edge is oriented from $v$ to $u$ in $\psi$.
Sometimes we write $v \rightarrow u$, when when the coordinate is irrelevant.
An edge $v \xrightarrow{j} u$ is forward if $j \in u$ and otherwise we say it is backward.

We say that $\psi$ is a \emph{Unique Sink Orientation} (USO) if every non-empty face 
has a unique sink.
In the rest we write $n$-USO to mean a USO over $Q^n$.
Here $n$ is always used to mean the dimension of the corresponding USO.
Consider a USO $\psi$; we define its outmap $s_\psi$, in the spirit of Szab\'o
 and Welzl \cite{SW}.
The \emph{outmap} is a function $s_\psi: Q^n \rightarrow 2^{[n]}$, defined by
$ s_\psi(v) = \{ j \in [n] | v \xrightarrow{j} v \oplus \{j\} \}$ %\exists u\in \nd(v) \text{ s.t. }  v \xrightarrow{j} u \text{ w.r.t } \psi \}$
for every $v\in Q^n$. 
A sink of a face $F_{J,v}$ is a vertex $u \in F_{J,v}$, such that $s_\psi(u) \cap J = \emptyset$.
We mention the following lemma w.r.t. the outmap function.
\begin{lemma}[\cite{SW}] \label{lem:outmap}
For every USO $\psi$, $s_\psi$ is a bijection.
\end{lemma}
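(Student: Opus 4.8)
The plan is to show that $s_\psi$ is injective; since $s_\psi$ maps the finite set $Q^n$ to the equinumerous set $2^{[n]}$, injectivity immediately gives surjectivity and hence bijectivity. To prove injectivity, I would argue the contrapositive in a localized form: if $v \ne u$ are two vertices with $s_\psi(v) = s_\psi(u)$, I derive a contradiction with the unique sink property. First I would reduce to the smallest face containing both vertices, namely $F_{J,v}$ with $J = v \oplus u$; the key observation is that for any coordinate $j \in J$, whether the edge of $F_{J,v}$ at $v$ in direction $j$ points "into" the subface toward $u$ or "away" is governed precisely by the bit $j \in s_\psi(v)$ together with whether $j \in v$ or $j \in u$.

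The central step is the following claim about a single coordinate flip: if $v \xrightarrow{j} v' = v \oplus \{j\}$, then $s_\psi(v)$ and $s_\psi(v')$ agree on coordinate $j$ being "flipped" — more precisely $j \in s_\psi(v)$ and $j \notin s_\psi(v')$ (the edge leaves $v$ and enters $v'$) — and, crucially, for every other coordinate $k \ne j$, the membership of $k$ in $s_\psi(v)$ versus $s_\psi(v')$ can differ, but the pattern is constrained by applying the unique sink property to each $2$-face $F_{\{j,k\},v}$. I would spell this out: in a $2$-face there are exactly two edges in each direction, and the unique-sink condition on the four vertices forces that the two $k$-edges are either "parallel" (both $s$-values agree on $k$) or form a specific pattern; in all legal USO configurations of a square, one checks directly that $s_\psi$ restricted to the four vertices of any $2$-face is already a bijection onto $2^{\{j,k\}}$.

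From here the argument is an induction on $|J| = |v \oplus u|$. The base cases $|J| \in \{0,1,2\}$ are handled by the direct square analysis above (and the trivial cases). For the inductive step, pick any $j \in J$, let $v' = v \oplus \{j\}$, and consider the facet $F_{J \setminus \{j\}, \cdot}$ containing $v'$ and the vertex $u$; I would want to say that $s_\psi$ restricted to this facet is a USO outmap of dimension $|J| - 1$ (the restriction of a USO to a face is a USO, and the outmap of the restriction is the outmap of $\psi$ with the $j$-coordinate deleted), apply the inductive hypothesis there to the pair $v', u$ — after arguing $s_\psi(v')$ and $s_\psi(u)$ still agree on all of $J \setminus \{j\}$ — conclude $v' = u$, hence $|v \oplus u| = 1$, contradicting $|J| \ge 2$ unless $v = u$.

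The main obstacle is the bookkeeping in the inductive step: I need that $s_\psi(v) = s_\psi(u)$ actually propagates to $s_\psi(v') = s_\psi(u)$ on the coordinate set $J \setminus \{j\}$, which requires the square lemma to control how flipping coordinate $j$ changes the other bits of the outmap, and I must make sure the "deleted-coordinate" outmap of the facet genuinely coincides with the restriction — i.e. that no edge direction in the facet is affected by the ambient coordinate $j$. This is exactly where the unique sink property of $2$-faces does the work, so I expect the square analysis to be the technical heart; once that is clean, the induction is routine.
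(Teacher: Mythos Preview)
The paper does not prove this lemma; it merely cites~\cite{SW}. So the comparison is to the standard Szab\'o--Welzl argument, and your inductive plan has a genuine gap. You want to pass from $v$ to $v' = v \oplus \{j\}$ and then claim that $s_\psi(v')$ still agrees with $s_\psi(u) = s_\psi(v)$ on $J \setminus \{j\}$, so that the inductive hypothesis applies to the pair $(v', u)$ inside the smaller face. But the square lemma does not deliver this: knowing that the four restricted outmaps on $F_{\{j,k\},v}$ are a bijection onto $2^{\{j,k\}}$ only tells you $s_\psi(v) \cap \{j,k\} \neq s_\psi(v') \cap \{j,k\}$; it does \emph{not} force agreement on $k$. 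For instance, if $s_\psi(v) \cap \{j,k\} = \{j\}$, the square can perfectly well have $s_\psi(v') \cap \{j,k\} = \{k\}$, so the $k$-bit flips when you step along $j$. Hence after the step you may have $s_\psi(v') \cap (J \setminus \{j\}) \neq s_\psi(u) \cap (J \setminus \{j\})$, and the induction does not close. You acknowledge this as ``bookkeeping,'' but it is the whole difficulty, and the square analysis cannot resolve it.

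The proof in~\cite{SW} bypasses this via a global symmetry. One first checks directly (a two-facet argument, with no appeal to injectivity) that reversing all edges along a single coordinate preserves the USO property: in any face containing that coordinate, the sink simply switches from one facet-sink to the other. Iterating over all coordinates in $S := s_\psi(u) = s_\psi(v)$ yields a new USO $\psi'$ whose outmap is $w \mapsto s_\psi(w) \oplus S$. In $\psi'$ both $u$ and $v$ have empty outmap, i.e.\ both are global sinks --- contradicting the unique sink property. No induction on $|v \oplus u|$ and no square lemma are needed.
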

The algorithmic problem for a USO $\psi$ is to find the global sink, i.e. find $t \in Q^n$ such that $s_\psi(t) = \emptyset$.
The computations take place in the \emph{vertex oracle} model: We have an oracle that given a vertex $v \in Q^n$, 
returns $s_\psi(v)$ (vertex evaluation). This is the standard computational model in the USO literature and all the upper and lower 
bounds refer to it. %are with respect to this model.

\subparagraph*{Reachmap and niceness.} We are now ready to define the central concepts of this paper.
 Given vertices $v,u\in Q^n$ we write $v \rightsquigarrow u$ if  
there exists a directed path from $v$ to $u$ (in $\psi$). We use $d(v,u)$ to denote the length of the shortest directed
path from $v$ to $u$; if there is no such path then we have $d(v,u) = \infty$ and otherwise we have $d(v,u) \geq |v \oplus u|$.
The following lemma is well-known and easy to prove by induction on $|v \oplus u|$.
\begin{lemma} \label{lem:pathtosink}
For every USO $\psi$, let $F \subseteq Q^n$ be a face and $u$ the sink of this face. Then, for every 
vertex $v \in F$ we have $d(v,u) = |v \oplus u|$.
\end{lemma}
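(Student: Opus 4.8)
The plan is to induct on the Hamming distance $k = |v\oplus u|$. The base case $k=0$ is immediate since then $v=u$ and $d(v,u)=0$. Since the excerpt already records that $d(v,u)\geq |v\oplus u|$ for every pair of vertices, it suffices in the inductive step to exhibit a directed path from $v$ to $u$ of length exactly $|v\oplus u|$; the reverse inequality is free.

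For the inductive step, write $F = F_{J,w}$, so that $u$ being the sink of $F$ means $s_\psi(u)\cap J = \emptyset$. The key observation is that $v$ must have an outgoing edge along some coordinate in $v\oplus u$, i.e. $s_\psi(v)\cap (v\oplus u)\neq\emptyset$. To see this, consider the smallest face containing both $v$ and $u$, namely $F' = F_{v\oplus u,\,v}$; since $v\oplus u\subseteq J$, this is a subface of $F$. Because $s_\psi(u)\cap J=\emptyset$ we also have $s_\psi(u)\cap(v\oplus u)=\emptyset$, so $u$ is the (unique) sink of $F'$ as well. As $v\in F'$ and $v\neq u$, the vertex $v$ is not the sink of $F'$, hence $s_\psi(v)$ meets $v\oplus u$; pick such a coordinate $j$.

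Now set $v' = v\oplus\{j\}$. Then $v\xrightarrow{j} v'$ is a directed edge, $v'\in F$ (as $j\in v\oplus u\subseteq J$), and since $j\in v\oplus u$ we have $|v'\oplus u| = |v\oplus u| - 1$. By the induction hypothesis applied to $v'$ (still a vertex of $F$, whose sink is still $u$), there is a directed path from $v'$ to $u$ of length $|v'\oplus u|$. Prepending the edge $v\xrightarrow{j}v'$ yields a directed path from $v$ to $u$ of length $|v\oplus u|$, so $d(v,u)\leq |v\oplus u|$, and with the trivial lower bound we conclude $d(v,u)=|v\oplus u|$.

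I expect the only real content — and hence the ``hard'' part, though it is short — to be the claim that $v$ admits an outgoing edge toward $u$ within the minimal face $F_{v\oplus u,v}$; everything else is bookkeeping on Hamming distances and the definition of a sink. Note the argument never uses acyclicity, so it holds for arbitrary (possibly cyclic) USO, exactly as stated.
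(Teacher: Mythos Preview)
Your proof is correct and follows exactly the approach the paper indicates: the paper does not spell out the argument but merely states that the lemma ``is well-known and easy to prove by induction on $|v\oplus u|$,'' and your induction (stepping to an out-neighbor inside the minimal face $F_{v\oplus u,v}$ and invoking the hypothesis) is precisely the intended one-paragraph argument.
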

Subsequently, we define the \emph{reachmap} $r_\psi:Q^n \rightarrow 2^{[n]}$, for every $v\in Q^n$, as:
%\[r_\psi(v) = s_\psi(v) \cup \bigcup \{r_\psi(u) | v \rightsquigarrow u \text{ w.r.t } \psi \}.\]
\[r_\psi(v) = s_\psi(v)  \cup \{j \in [n] | \exists u\in Q^n \text{ s.t. } v \rightsquigarrow u \text{ and } j \in s_\psi(u) \}.\]
Intuitively, the reachmap of a vertex contains all the coordinates that the vertex can reach with 
a directed path.
We say that vertex $v \in Q^n$ is $i$-covered by vertex $u\in Q^n$,
if $d(v,u) \leq i$ and  $r_\psi(u)\subset r_\psi(v)$ (proper inclusion).
Then, we say that a USO $\psi$ is \emph{$i$-nice} if every vertex $v\in Q^n$  (except the global sink) is $i$-covered by 
some vertex $u\in Q^n$.
Of course, every $n$-USO $\psi$ is $n$-nice since every vertex $v$ is $n$-covered by the sink $t$.
Moreover,  $r_\psi(v) \supseteq v\oplus t$, for every vertex $v \in Q^n$.

It is not difficult to observe that every USO in 1 or 2 dimensions is 1-nice, but the situation changes in 
3 dimensions. Consider the illustration in the figure below. 

\begin{figure}[htbp] 
	\centering
	\includegraphics[scale=1]{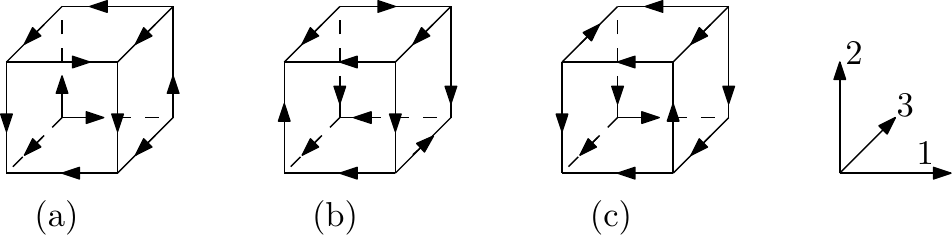}
	\caption{Examples of 3-dimensional USO: (a) Klee-Minty, which is 1-nice. 
	(b) The only 2-nice 3-dimensional AUSO which is not 1-nice. (c) The only cyclic USO in 3 dimensions, which is 3-nice.} 
	\label{fig:cubes}
\end{figure}
Let us note  that the AUSO in Figure~\ref{fig:cubes}b is the largest AUSO which is 
not $(n-2)$-nice. As we prove in Theorem~\ref{thm:UBnice}, every $n$-AUSO with $n\geq 4$ is 
$(n-2)$-nice.

\subparagraph*{Algorithmic properties of the reachmap.}  
%\AT{Is this section keeper or not? Probably we'll have to remove it due to space constraints in which case I would write couple of paragraphs about it?}
Our focus lies mostly on the concept of niceness. Nevertheless, we briefly discuss some of the algorithmic properties of the 
reachmap here. 

It was proved by the authors, in \cite{gt15}, that when given an AUSO $\psi$ described succinctly by a Boolean circuit, and two vertices $s$ and $t$, 
deciding if $s \rightsquigarrow t$ is $PSPACE$-complete. This means that the input is a Boolean circuit, of size polynomial in $n$, with $n$ 
input and $n$ output gates: the input to the circuit is a vertex and the output is the outmap of that vertex according to $\psi$. 

More recently, Fearnley and Savani \cite{fs16} proved %(with completely different techniques) 
that deciding whether the Bottom Antipodal algorithm (this is the algorithm that from a vertex $v$ jumps to vertex $v \oplus s_\psi(v)$), started at 
vertex $v$ will ever encounter a vertex $v'$ such that $j \in s_\psi(v')$, for a given coordinate $j$, is $PSPACE$-complete. 
This line 
of work was initiated in \cite{apr14} and further developed in \cite{ds15} and \cite{fs15} and aims at understanding the computational
power of pivot algorithms \cite{fs16}. Below, we provide a related theorem: it is $PSPACE$-complete to decide if a coordinate is in the reachmap
of a given vertex in an AUSO. It is, thus, computationally hard to discover the reachmap of a vertex.
\begin{theorem} \label{thm:pspace}
Let $\psi$ be an $n$-AUSO (described succinctly by a Boolean circuit), $v\in Q^n$ and $j \in [n]$. It is $PSPACE$-complete
to decide whether $j \in r_\psi(v)$.
\end{theorem}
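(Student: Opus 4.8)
The plan is to show membership in $PSPACE$ first, then hardness by reduction from the reachability problem of \cite{gt15}. For membership, note that deciding $j \in r_\psi(v)$ amounts to asking whether there is a directed path from $v$ to some vertex $u$ with $j \in s_\psi(u)$. Since $s_\psi$ is computable in polynomial time from the circuit, and a path (if one exists) need not be stored explicitly, we can perform a nondeterministic search: guess a path edge-by-edge, at each step evaluating the circuit to check the orientation, using only polynomial space for the current vertex and a step counter bounded by $2^n$. This puts the problem in $NPSPACE = PSPACE$ by Savitch. (Alternatively, and perhaps cleaner, one can express reachability of a $j$-out-vertex as a polynomial-space recursive predicate directly.)

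For hardness, I would reduce from the problem ``given an AUSO $\psi'$ on $Q^m$ described by a circuit, and vertices $s,t$, decide whether $s \rightsquigarrow t$'', shown $PSPACE$-complete in \cite{gt15}. The difficulty is that reaching $t$ is not the same as reaching a vertex whose outmap contains a \emph{fixed} coordinate $j$: the sink $t$ has empty outmap, and along the way many vertices may hit coordinate $j$ regardless of whether $t$ is reached. The fix is to embed $\psi'$ into an $(m+1)$-dimensional AUSO $\psi$ on $Q^{m+1}$ using a fresh coordinate, call it $n = m+1$, in such a way that coordinate $n$ is ``switched on'' in the outmap of some reachable vertex \emph{iff} the original path reaches $t$. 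Concretely, take two copies of $Q^m$ (the facets $x_n = 0$ and $x_n = 1$); put $\psi'$ on the lower facet, put on the upper facet an orientation whose global sink is the image $t'$ of $t$ (e.g. the ``uniform'' orientation pointing everything toward $t'$, which is acyclic and a USO), and orient the coordinate-$n$ edges so that the only forward $n$-edge is at $t$ itself, i.e. $t \xrightarrow{n} t'$, while every other $n$-edge points from the upper copy down to the lower copy. One checks this is a valid AUSO (the combined-face sink condition follows because in any face the lower-facet part already has a unique sink by the USO property of $\psi'$, and the $n$-edge at that sink decides the rest), that it is acyclic (the only way to cross from lower to upper is through $t$, from which one drops into the upper copy and descends monotonically to $t'$), and crucially that $n \in s_\psi(v)$ for a vertex $v$ in the lower copy precisely when $v = t$; moreover no vertex in the upper copy has $n$ in its outmap. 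Hence $n \in r_\psi(s)$ (with $s$ viewed in the lower copy) if and only if $s$ can reach $t$ inside the lower copy, which is exactly the \cite{gt15} instance. The reduction is clearly polynomial-time computable as a circuit transformation.

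The step I expect to be the main obstacle is verifying that the gadget construction yields a genuine \emph{unique} sink orientation on every face, not just on the whole cube and the two facets — the USO condition is a statement about all $2^{m+1}$ faces simultaneously, and the interaction between a sub-face of the lower copy, its counterpart in the upper copy, and the connecting $n$-edges needs a careful case analysis. A convenient way to organize this is via the outmap characterization: it suffices to exhibit $s_\psi$ explicitly and invoke a known criterion (an orientation is a USO iff its outmap is a bijection, Lemma~\ref{lem:outmap}, together with the local condition that $(s_\psi(v)\oplus s_\psi(u))\cap(v\oplus u)\neq\emptyset$ for adjacent $v\neq u$ — the Szab\'o--Welzl characterization). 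Writing $s_\psi(v) = s_{\psi'}(v)$ for $v$ in the lower copy with $v\neq t$, $s_\psi(t) = \{n\}$, and $s_\psi(v) = (v\oplus t')$ for $v$ in the upper copy, one verifies bijectivity and the local condition directly; acyclicity is then immediate from the edge directions. I would also double-check the degenerate base case (is $\psi'$ allowed to be $0$-dimensional?) and note that the construction composes, so nothing is lost in iterating if needed.
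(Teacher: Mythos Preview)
Your reduction is essentially the same as the paper's: add one fresh coordinate, make it combed toward the lower facet, and flip only the edge at $t$, so that the new coordinate enters the reachmap of $s$ precisely when $s\rightsquigarrow t$ in the original AUSO. The paper places a copy of $\psi$ in \emph{both} facets (rather than a uniform orientation in the upper one) and then verifies the result is an AUSO by invoking the Product Lemma (Lemma~\ref{lem:product}) followed by the single-edge flip corollary of Lemma~\ref{lem:hypersink}; this avoids the hands-on outmap check you anticipate as the main obstacle.

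Two small inaccuracies in your write-up are worth fixing, though neither breaks the argument. First, your explicit outmap for an upper-facet vertex $v'\neq t'$ should be $(v'\oplus t')\cup\{n\}$, not $v'\oplus t'$: the $n$-edge at $v'$ is outgoing (it points down to the lower copy). Consequently your claim that ``no vertex in the upper copy has $n$ in its outmap'' is false. The reduction survives because, as you already observe, the only way for $s$ to enter the upper facet is through $t\xrightarrow{n} t'$, and $t'$ is the global sink; hence $s$ can never reach any upper vertex with $n$ in its outmap, and the only reachable vertex with $n$ outgoing remains $t$ itself. Second, Lemma~\ref{lem:outmap} states only that the outmap of a USO is a bijection; bijectivity alone is not the USO criterion. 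The characterization you want is the Szab\'o--Welzl condition $(s_\psi(u)\oplus s_\psi(v))\cap(u\oplus v)\neq\emptyset$ for all $u\neq v$, which by itself suffices.
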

\begin{proof}
We provide a reduction from AUSO-Accessibility to prove $PSPACE$-hardness. 
The $PSPACE$ upper bound follows standard arguments that can be found in \cite{gt15}.
The input consists of $C_\psi$ (the circuit), which represents 
the $n$-AUSO $\psi$, and two vertices $s,t \in Q^n$.
We construct an $(n+1)$-USO $\psi'$ from $\psi$. 

First, all the edges on coordinate $n+1$ are backwards uniform, with one exception that we discuss later. 
We embed the orientation $\psi$ in the faces $A$ and $B$ illustrated in the figure below.
We flip the edge
$F_{\{n+1\}, t}$ which is safe as the outmaps of the two vertices involved 
differ only in the connecting coordinate.
An illustration of the construction appears in Figure~\ref{fig:pspace}.

\begin{figure}[htbp]
	\centering
	\includegraphics[scale=0.75]{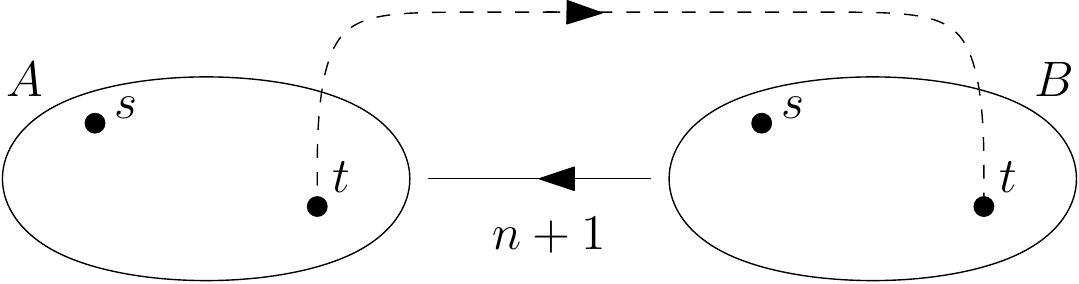}
	\caption{An illustration of the construction. The vertices $s$ and $t$ are the ones from the input of 
	the AUSO-Accessibility instance. The flipped edge appears dashed in the figure.} 
	\label{fig:pspace}
\end{figure}

This defines $\psi'$. Note that $\psi'$ is an AUSO follows from Lemma~\ref{lem:product}.
Consider the vertex $s \in A$. We have that $\{n+1\} \in r_\psi(s)$
if and only if there is a path $s \rightsquigarrow t$. 
\end{proof}

Finally, we want to note that it is natural to upper bound 
algorithms on AUSO by the reachmap of the starting vertex. Any reasonable path-following algorithm that solves 
an AUSO $\psi$ in $c^n$ steps, for some constant $c$, can be bounded by $c^{|r_\psi(s)|}$ where $s$ is the starting vertex. 
The reason is that the algorithm will be contained in the cube $F_{r_\psi(s), s}$ of dimension $|r_\psi(s)|$. 
Moreover, we claim that this is also possible for algorithms that are not path-following. As an example we give 
in Section~\ref{sec:FS}  a variant of the Fibonacci Seesaw algorithm of \cite{SW} that runs in time 
$c^{|r_\psi(s)|}$ for some $c < \phi$ (the golden ratio).

\section{Random Edge on $i$-nice USO} \label{sec:REnice}
In this section we describe how RE behaves on $i$-nice USO. We give a 
natural upper bound and argue that it is tight or almost tight in
many situations. In addition, we give a simple derandomization
of RE, which asymptotically achieves the same upper bound. % w.r.t. niceness.
Firstly, we consider the following natural upper bound.
\begin{theorem} \label{thm:UB}
Started at any vertex of an $i$-nice USO, Random Edge will perform an expected number of at most $O(n^{i+1})$ steps.
\end{theorem}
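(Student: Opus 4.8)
The plan is to exploit the $i$-niceness property directly: from any non-sink vertex $v$, there is a vertex $u$ with $d(v,u)\le i$ and $r_\psi(u)\subsetneq r_\psi(v)$, so within $i$ Random Edge steps there is a positive probability of reaching such a $u$ (or, more conveniently, a vertex whose reachmap is strictly contained in $r_\psi(v)$). First I would fix a vertex $v$ and consider a shortest directed path $P$ of length $\ell\le i$ from $v$ to the covering vertex $u$. The key observation is that at each vertex $w$ on this path, the out-degree is at most $n$, so Random Edge follows the particular outgoing edge of $P$ with probability at least $1/n$. Hence, in the next $\ell$ steps, with probability at least $n^{-\ell}\ge n^{-i}$, Random Edge traces out exactly $P$ and arrives at $u$. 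Crucially, since $r_\psi$ can only shrink along directed paths (if $w\rightsquigarrow w'$ then $r_\psi(w')\subseteq r_\psi(w)$, which follows straight from the definition of the reachmap), once Random Edge is at a vertex with reachmap $R$ it stays within $\{w : r_\psi(w)\subseteq R\}$ forever; and upon reaching $u$ the reachmap has strictly decreased.

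Next I would set up the amortized/phase argument. Define a ``phase'' to consist of $i$ consecutive Random Edge steps. At the start of each phase, if the current vertex is not the global sink, it is $i$-covered, so by the previous paragraph the phase ends at a vertex with a strictly smaller reachmap with probability at least $n^{-i}$. The reachmap is a subset of $[n]$, and along the walk it is monotonically non-increasing; moreover $r_\psi$ of a non-sink vertex is non-empty while $r_\psi$ of the sink is empty, and a strict decrease can happen at most $n$ times before the reachmap becomes empty (i.e.\ we reach the sink). Therefore the number of phases needed to reach the sink is stochastically dominated by a sum of $n$ independent geometric random variables, each with success probability $\ge n^{-i}$, giving an expected number of at most $n\cdot n^{i}=n^{i+1}$ phases, hence an expected number of at most $i\cdot n^{i+1}=O(n^{i+1})$ steps (treating $i$ as a constant or absorbing it; more carefully, $O(i\, n^{i+1})$, and since $i\le n$ one can also write $O(n^{i+2})$, but the sharper bound $O(n^{i+1})$ holds because within a phase one only needs the success to occur at \emph{some} step, not to waste a full block — see below).

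To actually get $O(n^{i+1})$ rather than $O(n^{i+2})$, I would refine the phase analysis slightly: instead of rigid blocks of $i$ steps, I track the potential $\Phi(w)=|r_\psi(w)|\in\{0,1,\dots,n\}$ and argue that the expected number of steps to decrease $\Phi$ by at least one (starting from any non-sink vertex) is at most $O(n^{i})$. This is because from the current non-sink vertex the covering path of length $\le i$ is followed within the next $i$ steps with probability $\ge n^{-i}$, so the waiting time until a strict decrease is bounded by a geometric-type random variable with ``trials'' of length $i$ and success probability $\ge n^{-i}$, yielding expected waiting time $\le i\cdot n^{i} = O(n^i)$ for constant $i$; summing over the at most $n$ decreases gives $O(n^{i+1})$.

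The main obstacle — and the place where care is needed — is the monotonicity bookkeeping: one must be sure that the reachmap is genuinely non-increasing along \emph{every} edge Random Edge might traverse (not just along the special covering path), so that a decrease achieved in one phase is never undone in a later phase; this is exactly the content of ``$w\rightsquigarrow w'\implies r_\psi(w')\subseteq r_\psi(w)$'', immediate from the definition but essential. A secondary subtlety is that the covering vertex $u$ of $v$ need not be reached by Random Edge on the \emph{shortest} path if an earlier strict decrease occurs en route — but that only helps, since any strict decrease is what we are counting. Conditioning must be handled so that ``success probability $\ge n^{-i}$ in the next block'' holds regardless of the past, which follows because the bound $1/n$ on each step's probability of following a prescribed edge is uniform over all vertices and all histories.
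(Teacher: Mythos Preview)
Your approach is essentially the paper's: use monotonicity of the reachmap along directed edges, bound the expected time to achieve one strict decrease of $|r_\psi(\cdot)|$, and multiply by $n$. The one place where you fall slightly short of the stated bound is the per-decrease analysis. Your ``blocks of $i$ steps with success probability $\ge n^{-i}$'' argument yields an expected waiting time of $i\cdot n^{i}$, hence $O(i\,n^{i+1})$ overall; your attempted refinement does not remove the factor $i$, and you end up hedging between $O(n^{i+1})$ for constant $i$ and $O(n^{i+2})$ in general.

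The paper avoids this factor by analysing a slightly finer Markov chain: track the distance $k\in\{0,1,\dots,i\}$ to the \emph{current} target; with probability $1/n$ move to $k-1$, and with probability $(n-1)/n$ reset to $i$ (since the new vertex is again $i$-covered). The expected hitting time of $0$ from $i$ in this chain is $\sum_{k=1}^{i} n^{k}\le 2n^{i}$ for $n\ge 2$, giving $O(n^{i})$ per decrease and hence $O(n^{i+1})$ total, with no extraneous $i$. In other words, rather than waiting for an entire block of $i$ correct moves before declaring a trial, the paper lets each wrong step restart the process immediately with a fresh target, which is what turns $i\cdot n^{i}$ into the geometric sum $\sum_{k=1}^{i} n^{k}$. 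Everything else in your write-up (monotonicity of $r_\psi$, the uniformity of the $1/n$ bound over histories, the ``early success only helps'' remark) is correct and matches the paper.
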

\begin{proof}
  For every vertex $v$, there is a directed path of length at most $i$
  to a \emph{target} $t(v)$, some fixed vertex of smaller reachmap. At
  every step, we either reduce the distance to the current target (if
  we happen to choose the right edge), or we start over with a new
  vertex and a new target. The expected time it takes to reach
  \emph{some} target vertex can be bounded by the expected time to
  reach state $0$ in the following Markov chain with states
  $0,1,\ldots,i$ (representing distance to the current target): at
  state $k>0$, advance to state $k-1$ with probability $1/n$, and fall
  back to state $i$ with probability $(n-1)/n$. A simple inductive
  proof shows that state $0$ is reached after an expected number of
  $\sum_{k=1}^{i}n^k = O(n^i)$ steps. Hence, after this expected
  number of steps, we reduce the reachmap size, and as we have to do
  this at most $n$ times, the bound follows.
\end{proof} 
Already, we can give some first evidence on the usefulness of niceness for analyzing RE: 
Decomposable orientations 
have been studied extensively in literature. The fact that RE terminates in $O(n^2)$ steps 
on them has been known at least since the work of Williamson-Hoke \cite{wh}. 
Let a coordinate be \emph{combed} if all edges  
on this coordinate are directed the same way. Then, a cube orientation is \emph{decomposable} 
if in every face of the cube there is a combed coordinate. The class of decomposable orientations,
known to be AUSO, contains the Klee-Minty cube \cite{kleeminty} (defined combinatorially in \cite{ss05}). 

It is straightforward to argue that such 
orientations are 1-nice and, thus, our upper bound from Theorem~\ref{thm:UB}
is also quadratic. Moreover, quadratic lower bounds have been proved for the behavior of RE on 
Klee-Minty cubes \cite{bp}. 
This line of work was initiated in \cite{ghz98}, where the lower
bound of $\Omega(n^2 / \log n)$ was proved, and it was finalized in 
\cite{bp}, where the lower bound of $\Omega(n^2)$ was proved. 
We conclude that, for $1$-nice USO, the upper bound in Theorem~\ref{thm:UB} is optimal. 
Let us note that, so far, the largest class of USO known to be solvable in polynomial time (specifically quadratic)
by RE is decomposable. 
In Section~\ref{sec:count} we will prove that the class of 1-nice USO is strictly larger than that of 
decomposable and contains, also, cyclic USO.
%Inspired by this, we investigate further the class of $1$-nice USO in Section~\ref{sec:1nice}.

\subsection{A derandomization of Random Edge}
Consider the \emph{join} operation. Given two vertices $u,v$, $join(u,v)$ is a vertex $w$
such that $u \rightsquigarrow w$ and $v \rightsquigarrow w$. We can compute $join(u,v)$ 
as follows: by Lemma~\ref{lem:outmap}, there must be a coordinate, say $j$, such that $j \in s_\psi(u) \oplus s_\psi(v)$.
Assume, w.l.o.g., that $j \in s_\psi(u)$. Consider the neighbor $u'$ of $u$ such that $u \xrightarrow{j} u'$. 
Recursively compute $join(u',v)$.  %\AT{Why does this take linear steps?}
It can be seen by induction on $|u \oplus v|$ that the $join$ operation takes $O(n)$ time.
Similarly, we talk about a join of a set $S$ of vertices. A $join(S)$ is a vertex $w$ such that 
ever vertex in $S$ has a path to it. We can compute $join(S)$ by iteratively joining all the vertices in $S$.
%(all this is with slight abuse of notation, as the join of a pair ---respectively set of vertices--- is not uniquely defined).

Furthermore, let $\nd^+(v) = \{u\in \nd(v) | v \rightarrow u\}$ denote the set of out-neighbors 
of a vertex $v$. In the subsequent lemma, we argue that the vertices in $\nd^+(v)$ can be joined
with linearly many vertex evaluations.  

\begin{lemma} \label{lem:1nicejoin}
Let $\psi$ be an $n$-USO and  $v\in Q^n$ a vertex with $\nd^+(v)$ already known. Then, there is an algorithm that joins the vertices in $\nd^+(v)$
with $|s_\psi(v)|$ many vertex evaluations. 
\end{lemma}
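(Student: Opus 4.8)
The plan is to realise the required common descendant by composing the two–argument $join$ operation, organised so that every vertex it touches stays inside the face $F := F_{s_\psi(v),v}$ spanned by $v$ and its outgoing coordinates. The structural fact I would rely on is that $v$ is the source of $F$: the coordinate set of $F$ is exactly $s_\psi(v)$, and $s_\psi(v)\cap s_\psi(v)=s_\psi(v)$. Consequently every element of $\nd^+(v)$ is a neighbour of $v$ lying in $F$, and the sink $t^{\ast}$ of $F$ is already a legitimate value of $join(\nd^+(v))$: by Lemma~\ref{lem:pathtosink}, each $u\in\nd^+(v)\subseteq F$ satisfies $d(u,t^{\ast})=|u\oplus t^{\ast}|<\infty$, so $u\rightsquigarrow t^{\ast}$. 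Thus the difficulty is not existence but producing such a vertex with only $|s_\psi(v)|$ oracle calls.

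Concretely, writing $\nd^+(v)=\{u_1,\dots,u_m\}$ with $m=|s_\psi(v)|$, I would process the out-neighbours one at a time, maintaining a current vertex $w$ together with $s_\psi(w)$ and the invariant that $w\in F$ and $u_1,\dots,u_k\rightsquigarrow w$ after stage $k$. At stage $k+1$ I replace $w$ by $join(w,u_{k+1})$, computed by the coordinate-flipping recursion defining $join$, at each step flipping a coordinate $j$ that lies in the symmetric difference of the two current outmaps \emph{and} in the symmetric difference of the two current vertices; such a $j$ always exists by Lemma~\ref{lem:outmap} applied to the sub-face joining them. Correctness is then automatic: $join$ only moves a token along an outgoing edge, so $w\rightsquigarrow join(w,u_{k+1})$ preserves all earlier reachabilities $u_i\rightsquigarrow w$; and because $w\oplus u_{k+1}$ is contained in the coordinate set of $F$, every flip keeps the token in $F$. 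The point that needs work is the count: I would show the process can be arranged so that each stage "spends" the coordinate $j$ with $u_{k+1}=v\oplus\{j\}$, and that the flips performed while merging $u_{k+1}$ can be charged to coordinates not spent before — using that the relevant sub-faces still have their source among the vertices already evaluated — giving at most $m$ distinct vertex evaluations in total.

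The main obstacle is exactly this amortised bookkeeping. A naive execution of the sequential joins only yields a quadratic bound, since the running vertex $w$ can drift far from $v$ within $F$; bringing it down to the linear bound $|s_\psi(v)|$ requires the disciplined choice of flip coordinates described above together with an argument that no coordinate is ever "re-opened" across stages (which is where one must use that $v$ — and, after the flips, the appropriate descendant — remains a source of the sub-face currently being merged). I expect this charging argument, rather than any single structural lemma, to be the crux of the proof.
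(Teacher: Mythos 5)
Your approach is genuinely different from the paper's, and it contains a gap that you yourself flag but do not close. You propose iterated two-argument $join$s of the out-neighbours $u_1,\ldots,u_m$, carried out inside $F=F_{s_\psi(v),v}$, and then hope that an amortised charging argument will bring the number of evaluations down from the obvious quadratic bound to $m=|s_\psi(v)|$. That charging argument is the entire content of the lemma, and it is not supplied. In fact, the sequential-join strategy has a structural obstruction that the sketch does not address: after merging $u_1$ and $u_2$ the running vertex $w_1=join(u_1,u_2)$ is typically \emph{not} one of the out-neighbours of $v$, so to perform $join(w_1,u_3)$ via the coordinate-flipping recursion you need $s_\psi(w_1)$, which costs a fresh evaluation not accounted for by ``spending'' any coordinate of $s_\psi(v)$. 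Nothing in ``the sub-face still has its source among the evaluated vertices'' prevents the running vertex from drifting to vertices whose outmaps are unknown, and every such vertex is an extra oracle call.

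The paper sidesteps this entirely with a different algorithm. It evaluates \emph{exactly} the $|s_\psi(v)|$ vertices of $\nd^+(v)$ up front and then makes \emph{no further evaluations at all}. It maintains a shrinking set $AV\subseteq\nd^+(v)$ of active out-neighbours and a shrinking set $AC\subseteq s_\psi(v)$ of active coordinates, and repeatedly discards a neighbour $w=v\oplus\{l\}$ whenever some other active $u\in AV$ has $l\notin s_\psi(u)$: then in the $2$-face $F_{\{l,\,v\oplus u\},v}$, $v$ is the source and $u$ is the sink, so $w\rightsquigarrow u$ is deduced purely from the already-known outmaps of $v$ and $u$. When this process stalls, either some surviving $u$ is already the sink of $F_{AC,u}$ (and is returned, at no extra cost since it was evaluated), or every surviving $u$ is the source of $F_{AC\setminus(u\oplus v),\,u}$, in which case the vertex $v'=v\oplus AC$ is returned \emph{without evaluating it}: each surviving $u$ reaches $v'$ because a source reaches every vertex of its face. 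The two design choices you would need to recover the bound --- never touching the oracle outside $\nd^+(v)$, and being willing to output a vertex you have not queried --- are exactly what the sequential-join framing makes awkward, and they are what your charging argument would have to reinvent. As it stands the proposal does not establish the stated bound.
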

\begin{proof}
First, we evaluate all the vertices in $\nd^+(v)$. We maintain a set of active vertices $AV$ and a set of active coordinates $AC$. 
Initialize $AV = \nd^+(v)$ and $AC = s_\psi(v)$. The algorithm keeps the following invariants: every vertex that gets 
removed from $AV$ has a path to some vertex in $AV$; also for every vertex $u$ s.t. $v\xrightarrow{l} u$, $u \in AV$ if and only if $l \in AC$. 

Then, for each $u \in AV$: for each $l \in AC$: %If there is coordinate $l \in AC$ such that $l \notin s_\psi(u)$ and $\{l\} \neq (u \oplus v)$ then we update
if $l \notin s_\psi(u)$ and $\{l\} \neq (u \oplus v)$ then we update
$AC \leftarrow AC \setminus \{l\}$ and $AV \leftarrow AV \setminus (v\oplus \{l\})$. See Figure~\ref{fig:neighborjoin}.
			\begin{figure}[htbp]
				\centering
				\includegraphics[scale=0.75]{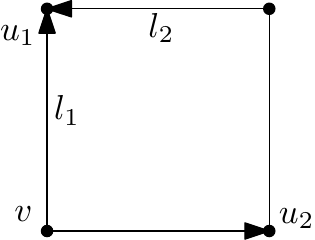}
				\caption{We have $u,w \in AV$, $l \in AC$,  $l \notin s_\psi(u)$ and $\{l\} \neq (v \oplus u)$. Thus, 
				the edge $F_{j, w}$ has to be outgoing for $w$. Hence, $w \rightsquigarrow u$ and the algorithm removes 
				$w$ from $AV$ and $l$ from $AC$.}
				\label{fig:neighborjoin}
			\end{figure}
			
If in the above loop the vertex $u$ is the sink of the face $F_{AC, u}$ then terminate and return $v'=u$. Of course, in this case
every vertex in $AV$ has a path to $u$. Otherwise the loop will terminate when there is no coordinate in $AC$ that satisfies the conditions above.
In this case 
we have that $\forall u \in AV$, $u$ is the source of the face $F_{AC\setminus (u\oplus v),u}$.
That is, it is the source of the face spanned by the vertex and all the active coordinates  $AC$ except the one that connects it to $v$.
In this case, we return the vertex $v'=(v \oplus AC)$. We have that every vertex in $AV$ has a path to $v'$: this is because in any USO 
the source has a path to every vertex (this can be proved similarly to Lemma~\ref{lem:pathtosink}).
\end{proof}

Using Lemma~\ref{lem:1nicejoin}, we can now argue that there exists a derandomization of Random Edge that
asymptotically matches the upper bound of Theorem~\ref{thm:UB}.

\begin{theorem} 
There is a deterministic algorithm that finds the sink of an $i$-nice $n$-USO $\psi$ with $O(n^{i+1})$ vertex evaluations.
\end{theorem}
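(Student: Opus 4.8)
The plan is to use the reachmap size as a monotone potential and decrease it one coordinate at a time. Starting from an arbitrary vertex $v$, as long as $s_\psi(v)\neq\emptyset$ I compute, using only $O(n^i)$ vertex evaluations, a vertex $v'$ with $r_\psi(v')\subsetneq r_\psi(v)$ (a \emph{shrink step}) and continue from $v'$. Since $r_\psi(v)\subseteq[n]$ strictly loses a coordinate at each shrink step, after at most $n$ steps $r_\psi(v)=\emptyset$, which forces $v$ to be the sink because $s_\psi(v)\subseteq r_\psi(v)$; the total cost is then $O(n^{i+1})$, the same as the expected bound for Random Edge in Theorem~\ref{thm:UB}. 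A shrink step exists in principle precisely because $\psi$ is $i$-nice: some vertex $t=t(v)$ has $d(v,t)\le i$ and $r_\psi(t)\subsetneq r_\psi(v)$, and whenever $a\rightsquigarrow b$ we have $r_\psi(b)\subseteq r_\psi(a)$; hence any vertex reachable from $t$ already has a strictly smaller reachmap. So it is enough to compute, deterministically and cheaply, a vertex reachable from \emph{every} vertex at directed distance at most $i$ from $v$ (this set contains $t$). Taking the $join$ of that whole distance-$i$ ball would work, but doing so directly costs $\Theta(n^{i+1})$ per shrink step --- an $n$-factor too much.

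The fix is a two-level construction that invokes Lemma~\ref{lem:1nicejoin} to join each out-neighbourhood in linear time. Let $R_k(v)=\{u : d(v,u)\le k\}$. Every directed path of length $\le i$ from $v$ has a prefix of length $\le i-1$, so its last edge leaves a vertex of $R_{i-1}(v)$; hence $R_i(v)\subseteq\bigcup_{u\in R_{i-1}(v)}\bigl(\{u\}\cup\nd^+(u)\bigr)$. I set $v' := join\bigl(\{\,join(\nd^+(u)) : u\in R_{i-1}(v)\,\}\bigr)$, with the convention that if the global sink is met at any point we stop and return it (otherwise every $\nd^+(u)$ appearing here is nonempty). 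For each such $u$, the vertex $join(\nd^+(u))$ is reachable from $u$ and from every out-neighbour of $u$; and since $t\in\{u^*\}\cup\nd^+(u^*)$ for some $u^*\in R_{i-1}(v)$, the vertex $v'$ is reachable from $t$. Therefore $r_\psi(v')\subseteq r_\psi(t)\subsetneq r_\psi(v)$, as a shrink step requires; note also $v\rightsquigarrow v'$ since $v\in R_{i-1}(v)$.

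It remains to count evaluations for one shrink step. A breadth-first search along outgoing edges computes $R_{i-1}(v)$ together with the outmaps of all its vertices, using $|R_{i-1}(v)|$ evaluations, and $|R_{i-1}(v)|\le\sum_{k=0}^{i-1}\binom nk\le\sum_{k=0}^{i-1}n^{k}=O(n^{i-1})$. For each of these $O(n^{i-1})$ vertices $u$, Lemma~\ref{lem:1nicejoin} computes $join(\nd^+(u))$ with $|s_\psi(u)|=O(n)$ evaluations, hence $O(n^{i})$ in total. Folding the $O(n^{i-1})$ resulting vertices into a single $join$ costs $O(n)$ per pairwise $join$, again $O(n^{i})$ in total. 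So one shrink step uses $O(n^{i})$ evaluations, and after at most $n$ of them we reach the sink, for $O(n^{i+1})$ overall. I expect the only genuine obstacle to be this accounting: joining $R_i(v)$ directly, or recursing naively on the distance parameter, yields $O(n^{i+1})$ or worse per shrink step; it is the separation ``BFS only to depth $i-1$, then one batched linear-time join of each out-neighbourhood'' --- enabled by Lemma~\ref{lem:1nicejoin} --- that brings a shrink step down to $O(n^{i})$.
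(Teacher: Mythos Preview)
Your proposal is correct and follows essentially the same approach as the paper: compute $R_{i-1}(v)$ by BFS, apply Lemma~\ref{lem:1nicejoin} to each $\nd^+(u)$ for $u\in R_{i-1}(v)$, and then fold the resulting $O(n^{i-1})$ vertices into a single join, yielding a vertex that $i$-covers $v$ in $O(n^i)$ evaluations per round. The paper adds one remark you omit: the algorithm need not be told $i$ in advance, since iterating $i=1,2,\ldots$ until a shrink step succeeds does not change the asymptotic cost.
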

\begin{proof}
Let $v$ be the current vertex. Consider the set $R_i \subseteq 2^{[n]}$ of vertices that are reachable along
directed paths of length at most $i$ from $v$. Since $\psi$ is $i$-nice, we know that at least one of them has strictly smaller
reachmap. In particular, any vertex reachable from all the vertices in $R_i$ has a smaller reachmap. 
Thus, we compute a $join$ of all the vertices in $R_i$.  

Consider the set $R_{i-1}$. The size of $R_{i-1}$ is bounded by $|R_{i-1}| \leq \sum_{k=0}^{i-1} \binom{n}{k} \leq \sum_{k=0}^{i-1} n^k$ 
and, thus, $|R_{i-1}| = O(n^{i-1})$.
Every vertex in $R_i$ can be reached in one step from some vertex in $R_{i-1}$. Assume that none of the vertices in $R_{i-1}$ is the 
sink; otherwise, the algorithm is finished. Then, for every vertex $v \in R_{i-1}$ we join $\nd^+(v)$ with the algorithm from 
Lemma~\ref{lem:1nicejoin}, with $O(n)$ vertex evaluations. Therefore, with $O(n^i)$ vertex evaluations we have 
a set $S$ of $O(n^{i-1})$ many vertices
and each $v' \in S$ is a join of $\nd^+(v)$ for some vertex $v \in R_{i-1}$. 

The next step is to join all the vertices in set $S$, using the algorithm at the beginning of the current section, which takes $O(n)$ for 
each pair of vertices. Hence, the whole procedure will take an additional $O(n^i)$ vertex evaluations. The result is a vertex $u$ 
that joins all the vertices in $R_i$ and thus $i$-covers $v$. 
Because the size of the reachmap decreases by at least one in each round,
we conclude that this algorithm will take at most
$O(n^{i+1})$ steps.

Finally, note that to achieve this upper bound we do not need to know that the input USO is $i$-nice.
Instead, we can iterate through the different values of $i=1,2,\ldots$ without changing the asymptotic 
behavior of the algorithm.
\end{proof}

\section{On the niceness of known lower bound constructions}  \label{sec:known}
As further motivation for the study of niceness of USO, we want to argue that 
RE can solve the AUSO instances that were designed as lower bounds for 
other algorithms in polynomial time. 
This is because of provable upper bounds on the niceness of 
those constructions. With similar arguments, upper bounds on the niceness 
of the AUSO that serve as subexponential lower bounds for RE can be shown;
thus, RE has upper bounds on these constructions that are almost matching to the
lower bounds.
This can be seen as a direct application of the concept of niceness.

To argue about the upper bounds on the niceness of the known constructions,
we first have to describe the standard tools used to construct USO that were introduced in \cite{ss04}.  

\subsection{The constructive lemmata of \cite{ss04}}  \label{sec:constrlem}
Schurr and Szab\'o, in \cite{ss04}, kickstarted a new direction for lower-bounding 
algorithms on AUSO. Not only because they proved by an adversarial argument that any deterministic algorithm needs
at least $\Omega{(n^2 / \lceil \log n \rceil)}$ steps to solve an AUSO. But, also, because they
provided the methods for constructing USO, which where used in most of the constructions we discuss later. 
Here we rewrite the two constructive lemmas for the sake of completeness and 
in order to argue about the preservation of niceness when they are applied.

\begin{lemma}[Product Lemma] \label{lem:product}
Let $A$ be a set of coordinates, $B\subseteq A$ and $\bar{B} = A \setminus B$.
Let $\tilde{s}$ be a USO on $Q^{B}$ and let $s_u$, $u \in Q^{B}$, be $2^{|B|}$ USOs on $Q^{\bar{B}}$. Then, the orientation defined
by the outmap 
\[ s(v) = \tilde{s}(v \cap B) \cup s_{v \cap B}(v \cap \bar{(B)}) \]
on $Q^A$ is a USO. Furthermore, if $\tilde{s}$ and all $s_u$ are acyclic so is $s$. \vskip0.3cm 
Let $z$ be the sink of $\tilde{s}$. If $\tilde{s}$ is $i$-nice and $s_z$ is $i$-nice then so is $s$.
\end{lemma}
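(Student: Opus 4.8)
The plan is to work directly from the definitions and exploit the product structure of the outmap $s(v) = \tilde s(v\cap B)\cup s_{v\cap B}(v\cap\bar B)$. Fix a vertex $v\in Q^A$ that is not the global sink. Write $v = (v_B, v_{\bar B})$ with $v_B = v\cap B$ and $v_{\bar B} = v\cap\bar B$. The first thing to establish is a description of directed paths and of the reachmap in the product. A directed edge of $s$ on a coordinate in $B$ is an edge of $\tilde s$ in the $B$-projection (with the $\bar B$-part frozen), and a directed edge on a coordinate in $\bar B$ is an edge of the fiber-USO $s_{v_B}$ over the current $B$-coordinate. Consequently, if $v\rightsquigarrow u$ in $s$, then projecting to $B$ gives $v_B\rightsquigarrow u_B$ in $\tilde s$, and I expect the clean statement $r_s(v)\cap B = r_{\tilde s}(v_B)$ together with $r_s(v)\cap\bar B \supseteq r_{s_{v_B}}(v_{\bar B})$ — and in fact the $\bar B$-part of $r_s(v)$ is exactly the union of $r_{s_u}(\cdot)$ over fibers $u$ reachable from $v_B$. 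Getting this book-keeping right is where most of the care goes.

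Next I split into two cases according to whether $v_B$ is the sink $z$ of $\tilde s$. \emph{Case 1: $v_B\neq z$.} Since $\tilde s$ is $i$-nice, there is a vertex $w_B\in Q^B$ with $d_{\tilde s}(v_B,w_B)\le i$ and $r_{\tilde s}(w_B)\subsetneq r_{\tilde s}(v_B)$. Lift this: follow the same coordinate sequence in $s$ starting from $v$ (legal, since $B$-edges of $s$ mirror $\tilde s$ with $\bar B$ frozen), arriving at $u = (w_B, v_{\bar B})$ with $d_s(v,u)\le i$. Using the reachmap description, $r_s(u)\cap B = r_{\tilde s}(w_B)\subsetneq r_{\tilde s}(v_B) = r_s(v)\cap B$, so it remains to check $r_s(u)\cap\bar B\subseteq r_s(v)\cap\bar B$; this should follow because every fiber reachable from $w_B$ in $\tilde s$ is also reachable from $v_B$ (as $v_B\rightsquigarrow w_B$), so the $\bar B$-coordinates contributed at $u$ are already contributed at $v$. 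Hence $r_s(u)\subsetneq r_s(v)$ and $v$ is $i$-covered.

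\emph{Case 2: $v_B = z$.} Then the $B$-part is already at the sink of $\tilde s$, so $r_s(v)\cap B = r_{\tilde s}(z) = \emptyset$, i.e. $r_s(v)\subseteq\bar B$, and the whole behavior of $s$ from $v$ stays inside the face $Q^{\bar B}$ over $z$, on which $s$ restricts to $s_z$. Since $v\neq$ global sink and $v_B=z$, we have $v_{\bar B}\neq$ sink of $s_z$, so $i$-niceness of $s_z$ gives $w_{\bar B}$ with $d_{s_z}(v_{\bar B},w_{\bar B})\le i$ and $r_{s_z}(w_{\bar B})\subsetneq r_{s_z}(v_{\bar B})$; lifting to $u=(z,w_{\bar B})$ gives $d_s(v,u)\le i$ and, since on this face $r_s$ agrees with $r_{s_z}$, $r_s(u)\subsetneq r_s(v)$. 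This disposes of all non-sink vertices, proving $s$ is $i$-nice.

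The main obstacle I anticipate is not the case analysis but the reachmap identity in the product — specifically pinning down that $r_s(v)\cap B = r_{\tilde s}(v_B)$ and that the $\bar B$-contributions at the lifted vertex $u$ are dominated by those at $v$. One has to argue that following an $s$-path may interleave $B$-moves and $\bar B$-moves, yet the set of $B$-coordinates ever seen outgoing is governed purely by the $\tilde s$-projection, and that moving only within $B$ (as in the lift in Case 1) cannot create new reachable $\bar B$-coordinates. Both facts are intuitively clear from the product formula for $s(v)$, but writing them carefully — probably by induction on path length, as suggested for Lemma~\ref{lem:pathtosink} — is the part that needs attention.
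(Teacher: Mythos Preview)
Your proposal is correct and follows essentially the same two-case split and lifting construction as the paper's proof. The paper streamlines your Case~1 slightly: once the lifted path $v\rightsquigarrow u$ in $s$ is established, the containment $r_s(u)\subseteq r_s(v)$ is immediate from the definition of reachmap, so there is no need to separately control $r_s(u)\cap\bar B$; only the strictness remains, for which the single fact $r_s(\,\cdot\,)\cap B = r_{\tilde s}((\,\cdot\,)_B)$ (indeed just the inclusion $r_s(u)\cap B\subseteq r_{\tilde s}(w_B)$, obtained by projecting paths) is enough.
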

\begin{proof}
The lemma is proved in \cite{ss04}. Here we will only argue about the niceness part. 

Let $\tilde{s}$ be $i$-nice and consider $F_{\bar{B}, z}$, the $|\bar{B}|$-dimensional face that corresponds to $z$.
Consider any vertex $v' \in Q^A \setminus F_{\bar{B}, z}$ and let $u \in Q^B$ be such that $v = v'\cap B$ is $i$-covered by $u$ in 
$\tilde{s}$. Then, we have that $v'$ is $i$-covered by $u' = u \cup (v' \cap \bar{B})$ in $s$. First, there is a path from $v'$ to $u'$ because 
there is such a path in $\tilde{s}$. Thus, we have that $r(u') \subseteq r(v')$. 
By our assumptions, there exists a coordinate $l \in B$ such that $l \in \tilde{r}(v) \setminus \tilde{r}(u)$. It is the case that 
$l \in r(v') \setminus r(u')$. That means that $r(u') \subset r(v')$ and thus $v'$ is $i$-covered by $u'$.

Now, let $v' \in F_{\bar{B}, z}$. Since every vertex in $F_{\bar{B}, z}$ corresponds to the sink $z$ of $\tilde{s}$, it 
cannot be $i$-covered by a vertex outside of $F_{\bar{B}, z}$. Since, we have that $s_z$ is $i$-nice it is the case 
that $v'$ is $i$-covered by some vertex $u' \in F_{\bar{B}, z}$. Note that if $s_z$ is $i'$-nice for $i' > i$ then
$s$ would only be $i'$-nice.
\end{proof}

The first dimension where there are USOs that are not 1-nice is 3. Therefore, for the above lemma and $|\bar{B}|\leq 2$ 
if $\tilde{s}$ is 1-nice then so is $s$. Following, we state the second lemma.

\begin{lemma}[Hypersink Reorientation] \label{lem:hypersink}
Let $A$ be a set of coordinates, $B\subseteq A$ and $\bar{B} = A \setminus B$. Let $s$ be a USO on $Q^A$ and let 
$Q^B$ be a subcube of $Q^A$. If $s(v) \cap \bar{B} = \emptyset$ for all
$v \in Q^B$ and $\tilde{s}$ is a USO on $Q^B$, then the outmap $s'(v) = \tilde{s}(v\cap B)$ for $v \in Q^B$ and 
$s'(v) = s(v)$ otherwise is a USO on $Q^A$.

If $s$ and $\tilde{s}$ are acyclic, then so is $s'$. %Finally, if $s$ and $\tilde{s}$ are $i$-nice, then so is $s'$.
\end{lemma}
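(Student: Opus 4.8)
The plan is to verify the unique-sink property face by face, distinguishing faces according to how they interact with the subcube $Q^B$. First I would recall the standing hypothesis: in the original USO $s$, every vertex $v \in Q^B$ is a local sink with respect to the coordinates $\bar{B}$, i.e.\ $s(v) \cap \bar{B} = \emptyset$; equivalently, all edges of $Q^A$ leaving the subcube $Q^B$ along a coordinate of $\bar{B}$ point \emph{into} $Q^B$, so $Q^B$ is a ``hypersink'' of $s$. The new orientation $s'$ differs from $s$ only on the edges internal to $Q^B$, where it installs the arbitrary USO $\tilde{s}$.

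The key steps, in order, are as follows. Let $F = F_{J,w}$ be an arbitrary nonempty face of $Q^A$; I must show $s'$ has a unique sink on $F$. Case (i): $J \cap \bar{B} \neq \emptyset$, so $F$ has an edge in some coordinate $j \in \bar B$ not internal to $Q^B$. Then $F$ is not contained in $Q^B$, and in fact for every vertex $u \in F \cap Q^B$ the coordinate $j$ is not in $s'(u) = \tilde s(u\cap B)$ since $j \notin B$ — but the edge in coordinate $j$ from such a $u$ points into $Q^B$ by the hypersink property, hence the sink of $F$ under $s'$ (if it lies in $Q^B$) would have to satisfy the same outgoing constraints as under $s$; more cleanly, since $s'$ and $s$ agree on all edges with at least one endpoint outside $Q^B$, and every face $F$ with $J\not\subseteq B$ that meets $Q^B$ has the property that its unique $s$-sink lies inside $F\cap Q^B$ only if that sink is also a sink for the $\bar B$-edges, I can argue the $s$-sink $t$ of $F$ is still the unique $s'$-sink: if $t \notin Q^B$ then all of $F$'s edges incident to $t$ are unchanged; if $t \in Q^B$ then $t$ being the $s$-sink of $F$ forces, via the hypersink condition, that $F\cap Q^B$ is itself a face on which $t$ is the sink, and one checks the $\tilde s$-sink of $F\cap Q^B$ coincides with $t$ using uniqueness in $\tilde s$ together with the fact that $t$'s $B$-edges inside $F$ already point correctly. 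Case (ii): $J \subseteq B$. If $F \subseteq Q^B$, then $s'$ restricted to $F$ is exactly $\tilde s$ restricted to a face of $Q^B$, which has a unique sink since $\tilde s$ is a USO. If $F \not\subseteq Q^B$ (so $F$ is disjoint from $Q^B$, as $J\subseteq B$ but $w\notin Q^B$ along some $\bar B$-coordinate), then $s' = s$ on all of $F$ and we are done. Collecting the cases gives that $s'$ is a USO.

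For the acyclicity statement I would argue by contradiction: suppose $s'$ has a directed cycle $C$. If $C$ uses no edge internal to $Q^B$, then every edge of $C$ is also an edge of $s$ (with the same orientation), contradicting acyclicity of $s$. If $C$ uses an edge internal to $Q^B$, I claim $C$ lies entirely within $Q^B$: indeed, once the walk is inside $Q^B$, the only way to leave $Q^B$ is along a $\bar B$-coordinate, but by the hypersink hypothesis every such edge points into $Q^B$ (and $s'$ agrees with $s$ there), so the walk can never exit; since it must close up, it stays inside. But then $C$ is a directed cycle of $\tilde s$ on $Q^B$, contradicting acyclicity of $\tilde s$.

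The main obstacle I anticipate is Case (i) of the USO verification — pinning down precisely why the old $s$-sink of a face $F$ with $J \cap \bar B \neq\emptyset$ remains the unique sink under $s'$ when that sink happens to lie in $Q^B$. The crux is that being an $s$-sink of $F$ already means all of its $B$-edges inside $F$ point toward it, and those edges are exactly the ones $s'$ may have flipped; so I need the compatibility fact that the $\tilde s$-sink of the face $F\cap Q^B$ of $Q^B$ agrees with the $s$-sink of $F$. This follows because $F\cap Q^B = F_{J,\,t}$ for the $s$-sink $t$ (using $J\subseteq B$ in this sub-situation after peeling off $\bar B$-coordinates via the hypersink property), and $t$ is trivially the sink of a face of the form $F_{J,t}$ under any orientation that makes all its edges point to $t$ — but $\tilde s$ need not do that, so in fact the correct statement is subtler and I would instead phrase it as: the unique $s'$-sink of $F$ is the unique $\tilde s$-sink of $F\cap Q^B$, which exists by the USO property of $\tilde s$, and uniqueness over all of $F$ follows because outside $Q^B$ the orientation is unchanged and $Q^B\cap F$ absorbs everything by the hypersink condition. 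This bookkeeping is the part that needs care; everything else is routine.
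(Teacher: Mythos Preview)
The paper does not actually prove this lemma; it is quoted from Schurr and Szab\'o \cite{ss04} and stated without proof (the surrounding discussion only remarks that hypersink reorientation does \emph{not} preserve niceness). So there is no ``paper's own proof'' to compare against here.

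Evaluated on its own, your proposal is correct and follows the standard argument. The acyclicity part is clean: once a directed walk enters $Q^B$ it cannot leave, so a cycle either avoids $Q^B$ entirely (contradicting acyclicity of $s$) or lies wholly inside (contradicting acyclicity of $\tilde s$). The USO part has the right case split and you correctly identify the only delicate point, namely faces $F$ with $J\cap\bar B\neq\emptyset$ that meet $Q^B$. Your final formulation is the right one: the unique $s'$-sink of such an $F$ is the $\tilde s$-sink of the subface $F\cap Q^B$, and no vertex outside $Q^B$ can be an $s'$-sink of $F$ because $s'=s$ there and the unique $s$-sink of $F$ already lies in $F\cap Q^B$ (the $s$-sink of $F\cap Q^B$ is an $s$-sink of all of $F$ by the hypersink condition).

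The one stylistic issue is that your Case~(i) discussion wanders: you first try to show the old $s$-sink survives, then correctly notice this fails when it lies in $Q^B$ (since $\tilde s$ may flip its internal edges), and only then pivot to the right statement. In a written-up proof you should go directly to the correct claim; the exploratory dead-end and the incorrect intermediate assertion ``$F\cap Q^B = F_{J,t}$'' (it is $F_{J\cap B,\,t}$) should be cut.
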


Unfortunately, the above lemma does not carry niceness. This is easy to see: Assume that $s$ is $i$-nice. 
There can be a vertex $v$ that is $i$-covered in $s$ by a vertex $u$. However the closest neighbor 
of $u$ in the hypersink $Q^B$ might be the source of subcube $Q^B$ in which case $v$ is not $i$-covered anymore in $s'$.
It is not difficult to construct such examples. 
Also note that the edge flip operation we used in Section~\ref{sec:lbAUSO} is a corollary of this lemma. In the construction 
of Theorem~\ref{thm:lbAUSO} we start with the uniform orientation which is decomposable and thus $1$-nice and end
with a $(n-2)$-nice AUSO. 

\begin{comment}
The upper bound for RE on the cyclic USO of Morris \cite{morris} is asymptotically matching 
the lower bound.
 We summarize the relevant information in the following table 
and describe the details on how to obtain it in Appendix~\ref{app:nicenessknown}.
\begin{center}
\begin{tabular}{l|c|c|c|c} 
Algorithm & Reference & Lower bound & Niceness & RE Upper bound \\ \hline
Random Facet & \cite{matouseklbrandomfacet},\cite{G02} & $ 2^{\Theta(\sqrt{n})}$ & $1$ & $O(n^2)$ \\
%General deterministic& \cite{ss04} & $\Omega(n^2 / \lceil \log n \rceil)$ & $\lceil \log n \rceil$ & $n^{O(\lceil \log n \rceil)}$  \\
Bottom Antipodal & \cite{ss05} & $\Omega(\sqrt{2}^n)$ & 2 & $O(n^3)$ \\
RE acyclic &\cite{ms06} & $2^{\Omega{(n^{1/3})}}$ & $n^{1/3}$ & $2^{O(n^{1/3} \log n)}$ \\
RE acyclic & \cite{hz16} & $2^{\Omega(\sqrt{n \log n})}$ & $\sqrt{n}$ & $2^{O(\sqrt{n} \log n)}$ \\
RE cyclic & \cite{morris} & $\frac{n-1}{2}!$ & $n$ & $n^{O(n)}$  
\end{tabular} 
\end{center}
\end{comment}

\subsection{The lower bound constructions}  
%Finally, we want to note that the lower-bound constructions of \cite{ms06},\cite{ss04},\cite{ss05} and \cite{hz16}
%are all based on the two lemmas above. The upper bounds for their niceness that we claim in Section~\ref{sec:nicenessknown} are 
%all based on the properties that we discuss above. 

Firstly, we consider Random Facet.
Matou\v{s}ek provided a family of LP-type problems that serve as a subexponential lower bound for 
Random Facet \cite{matouseklbrandomfacet}. Later, G\"{a}rtner translated these to AUSO \cite{G02} 
on which the algorithm requires a subexponential number of pivot steps that is asymptotically 
matching the upper bound in the exponent. The orientations that serve as a lower bound here are 
decomposable and, thus, 1-nice; RE can solve them in $O(n^2)$ steps. 
This is the only AUSO construction that we discuss in this section and is not based on the lemmas above.

Subsequently, consider the Bottom Antipodal algorithm. %This is a deterministic algorithm  \AT{We have described BA also in the previous section. We can cut down here.}
%which, at each step, jumps to the vertex that is antipodal in the cube spanned by the 
%out-edges of the current vertex. 
No non-trivial upper bounds are known for this 
algorithm. However, Schurr and Szab\'o \cite{ss05} have described AUSO on which Bottom Antipodal
takes $\Omega(\sqrt{2}^n)$ steps. Their constructions are 2-nice
 and, thus, RE can solve them with $O(n^3)$ steps. 

Last but not least, we discuss the lower bound constructions for RE. The first superpolynomial 
lower bound for RE on AUSO was proved by Matou\v{s}ek and Szab\'o in \cite{ms06}. 
Their construction achieves the lower bound
of $2^{\Omega{(n^{1/3})}}$ when Random Edge is started at a random vertex.
The construction is $n^{1/3}$-nice, which implies an upper bound of 
$2^{O(n^{1/3} \log n)}$ which is close to the lower bound.

Very recently, Hansen and Zwick \cite{hz16} improved these lower bounds by improving the 
techniques of \cite{ms06}. They achieve a lower bound of $2^{O(\sqrt{n \log n})}$.
Their construction is $\sqrt{n}$-nice; hence, we have an upper bound of 
$2^{O(\sqrt{n} \log n)}$ which is almost tight.

The upper bounds for the constructions of \cite{ss05}, \cite{ms06} and \cite{hz16} are all based
on Lemmas~\ref{lem:product} and~\ref{lem:hypersink}. The niceness bounds we mention follow 
directly from our arguments on the preservation of niceness fo these lemmas.
We turn our attention to cyclic USO.

For cyclic USO we have the lower bound provided
by Morris in \cite{morris}. The number of steps required when RE starts 
at any vertex of a Morris USO is at least $\frac{n-1}{2}! = n^{\Omega(n)}$ which is significantly 
larger than the number of all vertices. 
The lower bound implies that Morris USO is $i$-nice for $i = \Omega(n)$; 
otherwise, the upper bounds of Theorem~\ref{thm:UB} would contradict the
 lower bound. Indeed, as we explain in Section~\ref{sec:lbUSO}, Morris USO are 
 exactly $n$-nice and thus the upper bound we get by Theorem~\ref{thm:UB} is 
 tight to the lower bound.

In conclusion, using the niceness concept we can argue, firstly, that RE can solve
instances that serve as lower bounds for other algorithms in polynomial time. Secondly,
that on the lower bounds instances for RE the upper bounds are tight or almost tight.
We summarize the findings of this section in the following table: % (which also appears in Section~\ref{sec:REnice}):
\begin{center}
\begin{tabular}{l|c|c|c|c} 
Algorithm & Reference & Lower bound & Niceness & RE Upper bound \\ \hline
Random Facet & \cite{matouseklbrandomfacet},\cite{G02} & $ 2^{\Theta(\sqrt{n})}$ & $1$ & $O(n^2)$ \\
%General deterministic& \cite{ss04} & $\Omega(n^2 / \lceil \log n \rceil)$ & $\lceil \log n \rceil$ & $n^{O(\lceil \log n \rceil)}$  \\
Bottom Antipodal & \cite{ss05} & $\Omega(\sqrt{2}^n)$ & 2 & $O(n^3)$ \\
RE acyclic &\cite{ms06} & $2^{\Omega{(n^{1/3})}}$ & $n^{1/3}$ & $2^{O(n^{1/3} \log n)}$ \\
RE acyclic & \cite{hz16} & $2^{\Omega(\sqrt{n \log n})}$ & $\sqrt{n}$ & $2^{O(\sqrt{n} \log n)}$ \\
RE cyclic & \cite{morris} & $\frac{n-1}{2}!$ & $n$ & $n^{O(n)}$  
\end{tabular} 
\end{center}

\section{Counting 1-nice USO} \label{sec:count}
%The class of decomposable USO is the largest class known to be polynomially solvable by Random Edge. 
%we can now argue that the class of 1-nice USO is much larger than the class of decomposable ones. 
We have mentioned that the class of decomposable USO are 1-nice in Section~\ref{sec:REnice}. 
This class is the  
previously known largest class of USO, where Random Edge is polynomial. 
In this section we prove that the number of 1-nice USO is strictly larger than the number of 
decomposable USO. To the best of our knowledge a counting argument for 
decomposable USO does no exist in the literature. Thus, we provide one 
in Theorem~\ref{thm:decomp} and prove that 
the number of decomposable USO is $2^{\Theta(2^n)}$.

\begin{theorem} \label{thm:decomp}
The number of decomposable USO is $2^{\Theta(2^n)}$.
\end{theorem}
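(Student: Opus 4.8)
The upper bound is immediate: every USO on $Q^n$ is determined by its orientation of the $n2^{n-1}$ edges, so the total number of USO — decomposable or not — is at most $2^{n2^{n-1}} = 2^{O(n2^n)}$. This is not quite $2^{O(2^n)}$, so I would instead invoke the known counting bound that the number of USO on $Q^n$ is $2^{\Theta(2^n)}$ (Matou\v{s}ek~\cite{matousekcount}); since decomposable USO are a subclass, their number is also $2^{O(2^n)}$. Alternatively, one can prove the $2^{O(2^n)}$ upper bound directly from the recursive structure: a decomposable USO has a combed coordinate $j$, hence is obtained from two decomposable USO on $Q^{n-1}$ (the two facets perpendicular to $j$) by adding all $j$-edges in one of two directions; this gives $D(n) \le 2n\cdot D(n-1)^2$ with $D(1)=2$, which unrolls to $D(n) \le 2^{O(2^n)}$ (the $2n$ factors contribute only $2^{O(n\cdot\text{poly})}= 2^{o(2^n)}$, which I would check carefully — actually $\prod_{k\le n}(2k)^{2^{n-k}}$ is $2^{O(2^n)}$ since $\sum_k 2^{n-k}\log(2k)$ converges after factoring out $2^n$).

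For the lower bound I need $2^{\Omega(2^n)}$ decomposable USO. The plan is to build many decomposable USO by a product-type construction that ``uses up'' a constant fraction of the $2^n$ vertices as free binary choices. Concretely: reserve the last two coordinates, so $Q^n = Q^{n-2}\times Q^2$, and over each of the $2^{n-2}$ vertices $u\in Q^{n-2}$ place one of the (at least two) $2$-dimensional USO on the corresponding $2$-face, while on the $Q^{n-2}$ ``base'' directions use the uniform (all-edges-same-way) orientation, which is combed on every coordinate. By the Product Lemma (Lemma~\ref{lem:product}), the result is a USO; since the base is uniform it is combed in coordinate $1$ of every face, and one checks recursively that restricting to any face either keeps a base coordinate combed or lands inside a $2$-face (all of whose sub-faces are trivially fine), so the orientation is decomposable. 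Each of the $2^{n-2}$ independent choices contributes a factor of at least $2$ (in fact the number of $2$-dimensional USO is larger, but $2$ suffices), giving at least $2^{2^{n-2}} = 2^{\Omega(2^n)}$ decomposable USO. Combining with the upper bound yields $2^{\Theta(2^n)}$.

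The main obstacle — really the only nontrivial point — is verifying that the product construction above actually produces a \emph{decomposable} orientation, i.e. that \emph{every} face has a combed coordinate, not just the full cube. I would handle this by a clean inductive characterization: the construction places the uniform orientation on the $Q^{n-2}$-part and arbitrary $2$-USO on each fiber, and I claim any face $F_{J,v}$ is combed in the smallest-index coordinate of $J\cap[n-2]$ if that set is nonempty, and otherwise $F_{J,v}$ is entirely contained in a single $2$-face where decomposability is automatic since $|J|\le 2$ and every $\le 2$-dimensional USO is decomposable. The subtlety is that ``uniform on the base directions'' must be read as: the $j$-edge out of $v$ and the $j$-edge out of $v\oplus\{k\}$ point the same way for $j,k\in[n-2]$ — and this holds because the outmap restricted to $[n-2]$ is constant across fibers in the product with a uniform base. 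Once that is pinned down, the counting is a one-line computation and the theorem follows; I would also remark that one could squeeze a larger constant in the exponent by using more fibers or the precise count of small-dimensional USO, but it is unnecessary for a $\Theta$ statement.
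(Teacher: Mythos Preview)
Your upper bound via the recurrence $D(n)\le 2n\cdot D(n-1)^2$ is exactly what the paper does, and your unrolling is correct. One caveat: your first attempted upper bound, invoking Matou\v{s}ek's count of all USO as $2^{\Theta(2^n)}$, is wrong --- that count is $n^{\Theta(2^n)}=2^{\Theta(2^n\log n)}$, which is too weak here. Since you supply the recursive argument as an alternative, the upper bound survives, but drop the Matou\v{s}ek shortcut.

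Your lower bound takes a genuinely different route from the paper. The paper uses the \emph{same} recurrence from below, $D(n)\ge 2\cdot D(n-1)^2$ (comb coordinate $n$ in one of two directions, put arbitrary decomposable USO in each facet), and solves both bounds in one stroke by analyzing $f(n)=p(n)+2f(n-1)$. You instead give a one-shot product construction: uniform orientation on $Q^{[n-2]}$, arbitrary $2$-USO on each of the $2^{n-2}$ fibers, yielding $\ge 2^{2^{n-2}}$ decomposable USO directly. Your verification of decomposability is correct (any face containing a base coordinate is combed in it; any face contained in a fiber is at most $2$-dimensional and every $2$-USO is decomposable). The paper's approach is cleaner in that one recurrence analysis handles both directions and gives a slightly better constant ($2^{2^n-1}$ versus your $2^{2^{n-2}}$); your approach is more explicit and avoids solving a recurrence for the lower bound. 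Either is fine for a $\Theta$ statement.
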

\begin{proof}
Firstly, we analyze the following recurrence relation and then we explain how it is derived from counting the number of decomposable USO.
Let 
\[
F(n) = P(n)\cdot F(n-1)^2, \quad n>0,
\]
where $P$ is some positive function defined on the positive integers, and $F(0)$ is some fixed positive value. Taking (binary) logarithms, we equivalently obtain
\[
\log F(n) = \log P(n) + 2 \log F(n-1), \quad n>0.
\]
If we substitute $f(n) := \log F(n)$ and $p(n) := \log P(n)$ we arrive at
\[
f(n) = p(n) + 2 f(n-1), \quad n>0.
\]
Simply expanding this yields
\begin{eqnarray*}
f(n) &=& \sum_{i=0}^{n-1}2^i p(n-i) + 2^n f(0) \\
      &=& \sum_{i=1}^{n}2^{n-i} p(i) + 2^n f(0) \\
      &=& 2^n\left(f(0) + \sum_{i=1}^{n}2^{-i} p(i)\right).  
\end{eqnarray*}

We conclude that $f(n) = \Theta(2^n)$, if the infinite series $\sum_{i=1}^{\infty}2^{-i} p(i)$ converges.
A sufficient condition for this is $p(n) \leq c^n$ for $c<2$, or $P(n) \leq 2^{c^n}$.

Now, let us explain how the above recurrence is derived. Let $F(n)$ denote 
the number of different decomposable USO. Consider the coordinate $n$. We can orient it in a combed way: all edges are forward
or all edges are backward. In the two antipodal facets defined by this coordinate we can embed any decomposable orientation. Thus, we 
have $F(n) \geq 2 \cdot F(n-1)^2$. 

The upper bound follows from the same procedure but now we allow to choose the combed coordinate at every step. Again,
for the coordinate we choose we have two choices (edges oriented forward or backward). Hence, we have 
$F(n) \leq 2n \cdot F(n-1)^2$. Note that the construction we suggest, i.e. taking any two $(n-1)$-USO and connecting them with 
a combed coordinate to an $n$-USO is safe by Lemma~\ref{lem:product}.

In conclusion, we have that $2 \leq P(n) \leq 2n$ and, thus, the infinite series we discuss above converges and $f(n) = \Theta(2^n)$.
It follows that $F(n) = 2^{\Theta(2^n)}$.
\end{proof}

We can now argue that the class of 1-nice USO is much larger than the class of decomposable ones, 
and also contains cyclic USO.
Actually, we can give a lower bound of the form $n^{c 2^n}$, for some 
constant $c$. 
To achieve this lower bound, we use the same technique that Matou\v{s}ek \cite{matousekcount} used 
to give a lower bound on the number of all USO, which is by counting \emph{flip-matching} orientations (FMO).

Consider any uniform orientation, i.e. all edges are oriented from the global source to the global sink. 
Pick any matching of the edges 
and reverse the orientation of those edges. The result is an FMO. It is known that FMO are USO \cite{matousekcount, ss04} 
(this can be proved as a corollary of Lemma~\ref{lem:hypersink}, Corollary 6 in \cite{ss04}).
Note that an FMO can be cyclic: the cyclic USO in Figure~\ref{fig:cubes}c is an FMO. It can be obtained 
by starting from the backward-uniform orientation, i.e. all edges are backward, and flip the 3 edges 
that appear forward in the figure. 

\newpage
\begin{theorem} \label{thm:1niceLB}
The number of 1-nice $n$-dimensional USO is $n^{\Theta(2^n)}$.
\end{theorem}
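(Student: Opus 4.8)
The plan is to prove matching upper and lower bounds. For the upper bound $n^{O(2^n)}$, observe that the total number of $n$-USO is at most $(n+1)^{2^n} = n^{\Theta(2^n)}$: the outmap $s_\psi$ is a function $Q^n \to 2^{[n]}$, but since $s_\psi$ is a bijection (Lemma~\ref{lem:outmap}), it is a permutation-like object, and a crude count gives at most $(2^n)! = 2^{O(n 2^n)} = n^{O(2^n)}$ orientations of the hypercube that are USO; in fact one can just say the number of $n$-USO is at most $n^{O(2^n)}$, which is shown in \cite{matousekcount}. Since $1$-nice USO form a subclass, the upper bound follows immediately.

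The substance is the lower bound $n^{\Omega(2^n)}$. The idea, following Matou\v{s}ek's counting of flip-matching orientations, is to exhibit $n^{\Omega(2^n)}$ distinct $1$-nice USO. I would first fix one combed coordinate, say coordinate $n$, oriented forward-uniformly, splitting $Q^n$ into two antipodal facets $A$ (the one not containing $n$) and $B$ (the one containing $n$). Inside the ``lower'' facet $A$ I place an arbitrary flip-matching orientation: start from the uniform orientation of $A$ (all edges from source to sink), pick any matching $M$ of the edges of the $(n-1)$-cube $A$, and reverse exactly the edges of $M$. Inside $B$ I place the uniform orientation with the same source/sink pattern. The resulting orientation $\psi$ on $Q^n$ is a USO: it is a product of the $1$-USO on coordinate $n$ with the USO on each facet, via Lemma~\ref{lem:product} (and FMOs are USO by \cite{matousekcount}, a corollary of Lemma~\ref{lem:hypersink}).

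Next I must verify $1$-niceness of every such $\psi$. The key point is that every vertex $v$ has an outgoing edge leading to a vertex $u$ with $r_\psi(u) \subsetneq r_\psi(v)$. For a vertex $v \in A$: if $v$ is not the sink of $A$, it has an outgoing edge inside $A$; taking the edge on the coordinate that flips $v$ toward the sink of $A$ along a shortest path works, because in a flip-matching orientation moving toward the sink strictly shrinks the reachmap (the combed coordinate $n$ is always reachable until one is at the sink of $A$, and coordinates of $A$ only get removed). For the sink of $A$, its only outgoing edge is the coordinate-$n$ edge into $B$; the target has strictly smaller reachmap since coordinate $n$ disappears. For $v \in B$: $B$ is uniformly oriented, so $v$ moves monotonically toward the global sink, dropping a coordinate at each step. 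This case analysis, done carefully using that flips form a matching (so reachmaps behave predictably), establishes $1$-niceness. Finally, distinctness: different matchings $M$ give different orientations of $A$, hence distinct $\psi$; the number of matchings of the $(n-1)$-dimensional hypercube graph is $2^{\Omega(2^{n-1} \log n)}$... — actually, more simply, Matou\v{s}ek shows the number of FMOs on $Q^{n-1}$ is $n^{\Omega(2^n)}$, and each yields a distinct $1$-nice $\psi$, giving the bound.

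The main obstacle I anticipate is the $1$-niceness verification: one must rule out the possibility that a flip edge creates a vertex whose every out-edge leads to a vertex of \emph{equal} reachmap. The safe route is to be conservative and not allow arbitrary FMOs in $A$ but rather FMOs whose flipped edges are, say, all incident to the source (a ``spider'' of flips), or to embed the FMO recursively only in a sub-structure where the product lemma's niceness clause (Lemma~\ref{lem:product}, last sentence: $\tilde s$ $1$-nice and $s_z$ $1$-nice implies $s$ $1$-nice, valid here since $|\bar B|=1 \le 2$) applies directly. Using the product-lemma niceness clause inductively with a combed coordinate at the top, and FMOs only where they are provably $1$-nice, keeps the count at $n^{\Omega(2^n)}$ while making niceness automatic.
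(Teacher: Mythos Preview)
Your approach is essentially the paper's: fix a combed coordinate, place a large family of USOs on the ``source'' facet, keep the ``sink'' facet simple, and count. The paper allows an arbitrary USO (not just an FMO) in the source facet and builds the sink facet recursively (``target-combed''), but your special case---FMO in $A$, uniform in $B$---already yields $n^{\Omega(2^n)}$, and the upper bound is indeed just Matou\v{s}ek's count of all USOs.

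Where you stumble is the $1$-niceness verification for vertices in $A$. The claim that in an FMO ``moving toward the sink strictly shrinks the reachmap'' is not justified and is in fact false in general: FMOs can be cyclic and even $n$-nice (see Theorem~\ref{thm:usolb}). But the fix is much simpler than the fallbacks you propose, and you already use it for the sink of $A$: the combed coordinate $n$ is outgoing for \emph{every} $v\in A$, and its neighbor $u=v\cup\{n\}$ lies in $B$, which is a trap (no edge leaves $B$). Hence $n\in r_\psi(v)\setminus r_\psi(u)$ and $r_\psi(u)\subsetneq r_\psi(v)$, so every vertex of $A$ is $1$-covered regardless of which FMO (indeed, which USO) you put there. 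Vertices in $B$ are handled by uniformity. This is precisely the paper's one-line justification: ``for every vertex (except the sink) there is an outgoing coordinate that can never be reached again.''

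Your product-lemma fallback also works directly, but you have the roles reversed: with the frame $B=\{n\}$ and fibers $\bar B=[n-1]$ (so $|\bar B|=n-1$, not $1$), the niceness clause of Lemma~\ref{lem:product} requires only that $\tilde s$ (a $1$-USO) and $s_z$ (the fiber at the sink of $\tilde s$, which is your uniform facet $B$) be $1$-nice. Both are, so $s$ is $1$-nice with no restriction on the USO in $A$, no recursion, and no need to limit yourself to ``provably $1$-nice'' FMOs.
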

\begin{proof}
	Consider the following inductive construction. % that is a generalization of the Klee-Minty cube. 
	Let $A_1$ be any 1-dimensional USO. 
	Then, we construct $A_2$ by taking any 1-dimensional USO $A'_1$ and directing all edges on coordinate 2 towards $A_1$. 
	In general, to construct $A_{k+1}$: we take $A_k$ and put antipodally any $k$-dimensional USO $A'_k$. 
	Then, we direct all edges on coordinate $(k+1)$ 
	towards $A_k$. This is safe by Lemma~\ref{lem:product}.
	%Let us call this construction ``target-combed'' as there exists a path to the global sink, from any vertex, with  
	This construction satisfies the following property: 
	%the following property:
	for every vertex, the minimal face that contains this vertex and the global sink has a combed coordinate.
	We call such a USO \emph{target-combed}. It constitutes a generalization of decomposable USO.
	An illustration appears in Figure~\ref{fig:tcombed}. 
	\begin{figure}[htbp]
		\centering
		\includegraphics[scale=0.8]{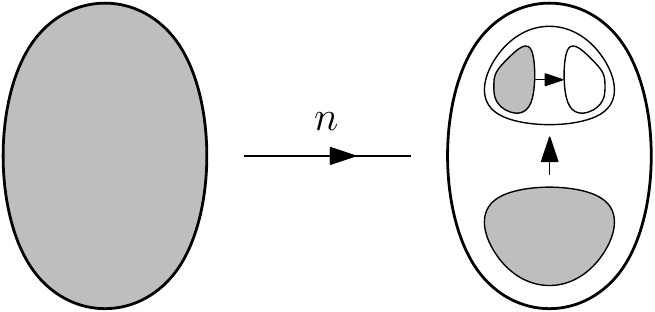}
		\caption{A target-combed $n$-USO. The two larger ellipsoids represent the two antipodal facets $A_{n-1}$
			and $A'_{n-1}$ and,  similarly, for the smaller ones. The combed coordinates are highlighted. The gray subcubes 
			can be oriented by any USO.} 
		\label{fig:tcombed}
	\end{figure}
	
	The construction is 1-nice since for every vertex (except the sink) there is an outgoing coordinate that can never be reached again.
	%Then, use the lower bounds from the proof of Matou\v{s}ek \cite{matousekcount}. 
	At every iteration step from $k$ to $k+1$ we can embed, in 
	one of the two antipodal $k$-faces, any USO.
	Thus, we can use the lower bounds of 
	\cite{matousekcount}, that give us a $\left( \frac{k}{e} \right)^{2^{k-1}}$ (assuming $k\geq 2$)
	lower bound for a $k$-face. This lower bound follows from counting different FMO by using a lower bound  
	on the number of perfect matchings of the hypercube graph.
	Summing up, we get: % the following lower bound:
	\[ uso1nice(n) \geq \sum_{k=1}^{n-1} uso(k) > uso(n-1) = \left( \frac{n-1}{e} \right)^{2^{n-2}} \]  
	where $uso1nice(n)$ and $uso(n)$ is the number of $n$-dimensional 1-nice USO and general USO respectively. 
	Thus, $uso1nice(n) = n^{\Omega(2^n)}$.
	The upper bound in the statement of the theorem is from the upper bound on the number of all USO, 
	by Matou\v{s}ek \cite{matousekcount}.
\end{proof}

\section{Bounds on niceness} \label{sec:bounds}  
In this section we answer the questions originally posed by Welzl \cite{emonice} with providing matching 
upper and lower bounds on the niceness of USO and AUSO. 

The first question we deal with here is ``Is there a USO that is not $(n-1)$-nice?''. 
We answer this to the affirmative; thus, the corresponding USO are only $n$-nice. They are,
however, cyclic. After we settle this we turn our attention to AUSO and prove matching 
upper and lower bounds: every AUSO is $(n-2)$-nice and there are AUSO that are not
$(n-3)$-nice.

\subsection{An $n$-nice lower bound for cyclic USO} \label{sec:lbUSO}
In this section we will provide a lower bound for the niceness of cyclic USO. The result is summarized in Theorem~\ref{thm:usolb}. 
First, consider the following lemma which follows easily from the definition of reachmap. 

\begin{lemma} \label{lem:fdimcyc}
Consider an n-dimensional USO $\psi$ and let $C \subset Q_n$ be a cycle that spans every coordinate. Then, 
every vertex $v \in C$ has $r_\psi(v) = [n]$.
\end{lemma}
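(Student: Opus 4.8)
Lemma~\ref{lem:fdimcyc} states that if $C \subset Q^n$ is a directed cycle using every coordinate, then every vertex on $C$ has full reachmap $[n]$. Here is how I would prove it.

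The plan is to show that from any vertex $v \in C$, every coordinate $j \in [n]$ appears in $s_\psi(u)$ for some $u$ reachable from $v$; together with the trivial containment this gives $r_\psi(v) = [n]$. First I would observe that since $C$ is a directed cycle, every vertex $v \in C$ can reach every other vertex $w \in C$ by simply following the cycle around (a cycle is strongly connected as a directed graph). So it suffices to show that for each coordinate $j$, \emph{some} vertex of $C$ has $j$ in its outmap, i.e. has an outgoing edge in direction $j$. Then that coordinate is witnessed in the reachmap of every vertex of $C$.

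Now I would use the hypothesis that $C$ spans every coordinate: for each $j \in [n]$, the cycle $C$ contains at least one edge $e_j$ in direction $j$, say $a \xrightarrow{j} b$ or $b \xrightarrow{j} a$ with $\{a,b\} = F_{\{j\},a}$. The edge $e_j$ of $C$, traversed in the direction the cycle goes, is oriented (in $\psi$) from some endpoint, call it $u_j$, to the other; hence $j \in s_\psi(u_j)$ and $u_j \in C$. This is the key step and it is essentially immediate from the definitions — the only thing to be careful about is that the orientation of $e_j$ ``as an edge of the directed cycle'' agrees with its orientation in $\psi$, which holds because $C$ is a directed subgraph of $\psi$. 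Since $u_j$ is reachable from $v$ (walk along $C$ from $v$ to $u_j$) and $j \in s_\psi(u_j)$, the definition of reachmap gives $j \in r_\psi(v)$. As this holds for every $j \in [n]$, we conclude $r_\psi(v) = [n]$.

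I do not anticipate a real obstacle here; the statement is a direct unwinding of the definition of reachmap combined with strong connectivity of a directed cycle. The only mildly delicate point is making explicit that ``$C$ spans every coordinate'' means each coordinate direction is used by at least one edge of $C$, and that such an edge, being part of the directed cycle, certifies a coordinate in the outmap of one of its endpoints — everything else is bookkeeping.
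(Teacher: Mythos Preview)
Your proposal is correct and matches the paper's approach: the paper does not spell out a proof but simply remarks that the lemma ``follows easily from the definition of reachmap,'' which is exactly the unwinding you carry out (strong connectivity of the directed cycle plus one outgoing $j$-edge on $C$ for each coordinate $j$).
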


The idea for the lower bound construction is intuitively simple and follows from the lemma above.
 Let $\psi$ be a cyclic $n$-USO 
over $Q^n$ that contains a directed cycle such that the edges that participate 
span all the coordinates. Then, every vertex $v$ on the cycle has $r_\psi(v) = [n]$.
Now consider the sink $t$ and assume the $n$ vertices in $\nd(t)$ participate in the 
cycle. By Lemma~\ref{lem:pathtosink}, every vertex has a path to $t$. This path has 
to go through one of the vertices in $\nd(t)$. It follows that every $v \in Q^n \setminus \{t\}$ has
$r_\psi(v)=[n]$. Therefore, the vertex antipodal from $t$ is only $n$-covered (by $t$).
This intuition is formalized in Theorem~\ref{thm:usolb}.

Note that the properties we just described are also satisfied by the Morris USO. This can be verified 
easily; for the interested reader we suggest these lecture notes \cite{RA} where there is a description of the 
Morris construction as a USO (the paper by Morris \cite{morris} describes it as a P-LCP). Thus,
Morris USO are $n$-nice.

Here we describe a much simpler USO, which is also an FMO
(a definition of FMO can be found in Section~\ref{sec:count}).  
The reason we think this is interesting is because it demonstrates that there are USO with large niceness, as in the example below, 
which RE can solve fast. It should be clear by the construction below that RE can solve it with polynomially many steps. In contrast, 
for Morris USO it will take more steps than the number of vertices of the hypercube.

\begin{theorem}\label{thm:usolb}
There exists a cyclic USO $\psi$ which is not $i$-nice for $i < n$. 
\end{theorem}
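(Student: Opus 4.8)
The plan is to construct an explicit flip-matching orientation (FMO) on $Q^n$ that contains a directed cycle spanning all $n$ coordinates and passes through all $n$ neighbours of the sink, so that Lemma~\ref{lem:fdimcyc} and Lemma~\ref{lem:pathtosink} together force $r_\psi(v) = [n]$ for every non-sink $v$. Concretely, I would start from the backward-uniform orientation on $Q^n$, where the sink is the full set $[n]$ and the source is $\emptyset$, and every edge $v \xrightarrow{j} v\oplus\{j\}$ points from the larger set towards the smaller set. Then I would flip a carefully chosen perfect matching (or near-perfect matching) of edges so as to create one long directed cycle. A natural candidate: think of the vertices $\emptyset, \{1\}, \{1,2\}, \ldots, \{1,\ldots,n\}$ as a "staircase" monotone path from source to sink; flipping the edges of this path reverses it, and combined with the "direct" backward edges this should close up into a Hamiltonian-like directed cycle through the relevant low-index vertices. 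The key design constraint is that the flipped edges form a matching (so the result is guaranteed to be a USO by the FMO property, which the excerpt attributes to Corollary~6 of \cite{ss04} via Lemma~\ref{lem:hypersink}).

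The core of the argument then has three steps. First, verify that the constructed orientation is an FMO, i.e. that the flipped edge set is a matching in the hypercube graph — this is purely combinatorial and amounts to checking that no two flipped edges share an endpoint. Second, exhibit the directed cycle $C$: trace through the orientation and check that following the edges from, say, the source $\emptyset$ returns to $\emptyset$ after using edges on every coordinate in $[n]$; here I would use the fact that flipping reverses the staircase path while the untouched "shortcut" edges provide the return arcs. Third, verify that every neighbour of the sink $t=[n]$ lies on $C$ (or at least has a directed path to a vertex of $C$ that then reaches $t$ through $\nd(t)$). Granting these, the conclusion is immediate: by Lemma~\ref{lem:fdimcyc} every $v\in C$ has $r_\psi(v)=[n]$; by Lemma~\ref{lem:pathtosink} every $v\neq t$ has a shortest (hence directed) path to $t$, which must pass through some $w\in\nd(t)\subseteq C$, so $r_\psi(v)\supseteq r_\psi(w)=[n]$, giving $r_\psi(v)=[n]$ for all $v\neq t$. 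Therefore the only vertex with a strictly smaller reachmap than the antipode $\bar t = \emptyset$ is $t$ itself, and $d(\emptyset,t)=n$ by Lemma~\ref{lem:pathtosink}, so $\psi$ is not $i$-nice for any $i<n$.

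The main obstacle I anticipate is simultaneously satisfying all three structural requirements with a single matching: the cycle must span all $n$ coordinates, it must hit all $n$ vertices of $\nd(t)$, and the flipped edges must be pairwise non-adjacent. A single reversed monotone source-to-sink path has $n$ edges on $n$ distinct coordinates but its internal vertices have degree two in the path, so those edges are not a matching — I would instead need to flip a sparser, well-spaced set of edges and argue more cleverly (or differently) that a spanning directed cycle through $\nd(t)$ still exists. A cleaner route may be to build the cycle first as an abstract object (e.g. the cyclic sequence obtained by going "up" on odd coordinates and "down" on even coordinates in some order, forming a closed walk that visits each neighbour of $t$) and then check that exactly the edges needing reversal relative to backward-uniform form a matching. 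I expect the bookkeeping of which edges get flipped, and the induction/case analysis showing the flipped set is a matching while the cycle genuinely spans $[n]$ and contains $\nd(t)$, to be the technical heart of the proof; once that combinatorial core is in place, the niceness conclusion follows formally from the two cited lemmas.
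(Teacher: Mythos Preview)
Your overall strategy coincides with the paper's: build an FMO on $Q^n$ from a uniform orientation so that a directed cycle $C$ spans all coordinates and contains every vertex of $\nd(t)$, then conclude via Lemma~\ref{lem:fdimcyc} and Lemma~\ref{lem:pathtosink} that every non-sink vertex has reachmap $[n]$, forcing the antipode of $t$ to be only $n$-covered. Your deduction of the niceness bound from these properties is correct (and in fact slightly cleaner than the paper's level-by-level argument).

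Two points need fixing. First, a terminology slip: you call the base orientation ``backward-uniform'' while simultaneously declaring the sink to be $[n]$ and the source $\emptyset$ \emph{and} saying edges point from larger to smaller sets. These are contradictory; the paper starts from the \emph{forward}-uniform orientation $\psi_{\mathbb{U}}$, whose sink is $[n]$.

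Second, and this is the real gap you already anticipate: you do not have a working cycle. Your staircase $\emptyset,\{1\},\{1,2\},\ldots,[n]$ is the wrong shape --- its edges are not a matching, and it lives near the bottom of the cube, far from $\nd(t)$. The paper's resolution is much more local: the cycle has length $2n$ and alternates between the top two levels $Q^n_{n-1}$ and $Q^n_{n-2}$, visiting every $[n]\setminus\{i\}$ and the $n$ intermediate vertices $[n]\setminus\{i,i{+}1\}$ (indices mod $n$). The downward steps $[n]\setminus\{i\}\to[n]\setminus\{i,i{+}1\}$ are the $n$ flipped edges, one per coordinate, and they are pairwise non-adjacent, hence a matching; the upward steps $[n]\setminus\{i,i{+}1\}\to[n]\setminus\{i{+}1\}$ are already forward in $\psi_{\mathbb{U}}$. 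With this cycle in hand your argument goes through verbatim.
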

\begin{proof}
We describe a family of cyclic FMO that contain a cycle which spans all the coordinates. 
To achieve this we start with the forward uniform orientation $\psiu$ and
flip $n$ edges to create a cycle $C$ with $2n$ vertices. As explained in Section~\ref{sec:bounds},
the trick is to involve all the vertices in $Q_n^{n-1}$ in the cycle. Consider the vertex  $[n] \setminus \{n\} \in Q_n^{n-1}$.
We can construct the desired cycle as follows: 
\begin{figure}[htbp]
	\centering
	\includegraphics[width=\textwidth]{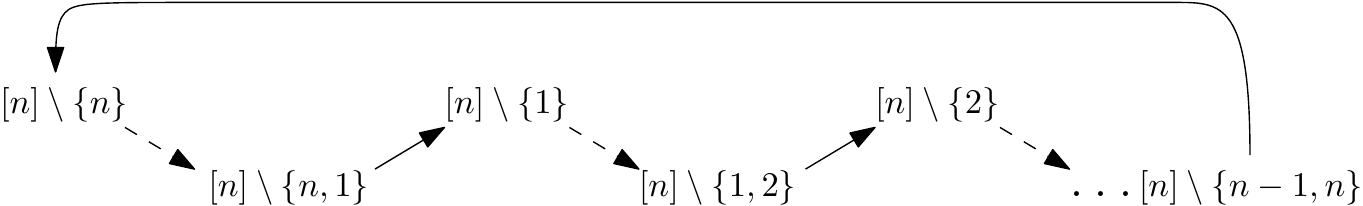}
	\caption{An illustration of the cycle. The dashed edges are flipped backwards.} 
	\label{fig:cyclelb}
\end{figure}

Of course, we can create this cycle by flipping exactly $n$ edges, one in each coordinate. 
This concludes the construction of $\psi$, our target USO.
Thus, $C$ spans all 
coordinates and by the lemma above, every vertex $v \in C$, has $r_\psi(v) = [n]$.

\begin{claim}
We have $r_\psi(v) = [n]$, $\forall v \in Q_n \setminus [n]$.
\end{claim}
We have that $\psi$ is an FMO and all backward edges are incident to vertices in $Q_n^{k}$, for $k=n-2,n-1$.
Thus, we have that all vertices in  $\bigcup_{k=0}^{n-3} Q_n^{k}$ are only incident to forward edges. 
Every vertex in $Q_n^{n-2}$ has at least one forward edge to a vertex in $Q_n^{n-1}$. 
We already argued that all vertices in $Q_n^{n-1}$ have full dimensional reachmaps. 
Therefore, every vertex in $\bigcup_{i=0}^{n-2} Q_n^{i}$ has a full dimensional reachmap too.

By the claim above, it follows that the only vertex that $i$-covers any other vertex $v$ is $[n]$. In particular, we have that the vertex $\emptyset$
is only $n$-covered. This concludes the proof of the theorem.
\end{proof}

%An example instance of the construction for 4 dimensions can be found in Figure~\ref{fig:4d2ncycle} below. 
Note that the 3-dimensional cyclic USO (depicted in Figure~\ref{fig:cubes}c) is also an instance of the construction suggested in this section. 

\subsection{An upper bound for AUSO}
Here, we prove an upper bound on the niceness of AUSO which, as we will see in the next section, is tight. 

We utilize the concept of Completely Unimodal Numberings (CUN), which was studied
by Williamson-Hoke \cite{wh} and Hammer \etal \cite{hammer}. To the best of our knowledge, this 
is the first time CUN is used to prove structural results for AUSO.
A CUN on the hypercube $Q^n$ means that
there is a bijective
function $\phi: Q^n \rightarrow \{0, \ldots, 2^n-1\}$ such that in every face $F$
there is exactly one vertex $v$ such that $\phi(v) < \phi(u)$, for every $u \in \nd(v) \cap F$. 
It is known, e.g. from \cite{wh}, that for every AUSO there is a corresponding CUN,
which can be constructed by topologically sorting the AUSO.

In the proof of the theorem below we will use the following notation: $w^k$ is the vertex 
that has $\phi(w^k) = k$, w.r.t. some fixed CUN $\phi$. An easy, but crucial observation concerns the three 
lowest-numbered vertices $w^0,w^1,w^2$. Of course, $w^1 \rightarrow w^0$ (where $w^0$ is the global sink); 
otherwise, $w^1$ would have been a second global minimum. Moreover, $w^2 \rightarrow w^j$
for exactly one $j \in \{0,1\}$. It follows, that both $w^1$ and $w^2$ are facet sinks.
We are ready to state and prove the following theorem.

\begin{theorem} \label{thm:UBnice}
Any $n$-AUSO, with $n \geq 4$, is $(n-2)$-nice.
\end{theorem}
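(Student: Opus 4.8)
The plan is to exploit the CUN structure, and in particular the observation already made in the excerpt that the two lowest-numbered non-sink vertices $w^1$ and $w^2$ are facet sinks. The key point is that a facet sink $w$ has a reachmap confined to that facet: if $w$ is the sink of the facet $F_{[n]\setminus\{j\},w}$, then $s_\psi(w) \subseteq \{j\}$, and from $w$ one can only ever reach vertices inside a single facet (the one obtained after possibly crossing coordinate $j$ once), so $|r_\psi(w)| \le n-1$ — in fact $r_\psi(w^1) = \{j_1\}$ for the relevant coordinate $j_1$, since $w^1\to w^0$ directly and $w^0$ is the global sink. So the reachmaps of the few bottom vertices are very small, and these will serve as the ``targets'' that cover everyone else.

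First I would fix a CUN $\phi$ obtained by topologically sorting the AUSO, and set $t=w^0$ (the sink), which we need not cover. For an arbitrary vertex $v \ne t$, I want to produce a directed path of length at most $n-2$ from $v$ to a vertex of strictly smaller reachmap. The natural candidate target is the sink $t$ itself, reachable by a path of length $|v\oplus t| \le n$ by Lemma~\ref{lem:pathtosink}; the work is to shorten this to $n-2$ while still decreasing the reachmap, or else to route instead to $w^1$ or $w^2$. I would split into cases according to $|v \oplus t|$. If $|v\oplus t| \le n-2$, then $v \rightsquigarrow t$ with $d(v,t)=|v\oplus t|\le n-2$ and $r_\psi(t)=\emptyset \subsetneq r_\psi(v)$, so $v$ is $(n-2)$-covered and we are done. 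The remaining cases are $|v\oplus t| = n-1$ (so $v$ is the sink of some facet, hence a facet sink, and $v$ lies next to the antipode of $t$) and $|v\oplus t| = n$ (so $v$ is the antipode of $t$). These are the only hard cases.

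For $|v \oplus t| = n$, i.e. $v$ is the antipode $\bar t$: here I would argue that $v$ has a directed path of length at most $n-2$ either to one of $w^1, w^2$ or to some intermediate vertex whose reachmap is properly smaller. Since $w^1$ and $w^2$ are facet sinks, $r_\psi(w^1)$ and $r_\psi(w^2)$ each miss at least one coordinate. The antipode $\bar t$ has a shortest path to $t$ of length $n$ passing through every coordinate exactly once; its first edge takes it to some neighbor $v'$ with $|v' \oplus t| = n-1$, and from $v'$ there is a path of length $n-1$ to $t$. If I can choose the first step so that the resulting neighbor $v'$ has $r_\psi(v') \subsetneq r_\psi(\bar t) = [n]$ — equivalently, $v'$ cannot reach some coordinate — then $v'$ $1$-covers $\bar t$ and we are done with room to spare; a coordinate $j$ is unreachable from $v'=\bar t \oplus \{j\}$ precisely when the edge on $j$ at $\bar t$ is incoming, i.e. $j \notin s_\psi(\bar t)$, and such a $j$ exists unless $s_\psi(\bar t)=[n]$, i.e. $\bar t$ is a source. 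So the genuinely hard sub-case is when $\bar t$ is the global source. In that case I would use acyclicity more heavily: follow any directed path from $\bar t$; it must reach $t$, but I claim it reaches, within $n-2$ steps, a vertex of smaller reachmap. Concretely, consider the path that at each step moves toward $t$; after two steps we are at a vertex $v''$ with $|v''\oplus t| = n-2$, and $v''$ has a directed path of length $n-2$ to $t$ — so it suffices to show $r_\psi(v'') \subsetneq r_\psi(\bar t)=[n]$, which again amounts to finding a coordinate not reachable from $v''$. This should follow because, $\bar t$ being the source and the cube being acyclic, after leaving $\bar t$ we can never return, so at least the two coordinates crossed so far cannot both still be ``reachable back'' — and a careful count, together with $w^1,w^2$ being facet sinks that pin down small reachmaps near the bottom, closes it.

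The case $|v\oplus t| = n-1$ is similar but easier: $v$ is the sink of a facet $F$, so all of $v$'s outgoing edges (at most one, on the coordinate $j$ with $t \notin F$... more precisely on the unique coordinate leaving $F$) lead out of $F$; following that edge lands at the antipode $\bar t$ or at a vertex one closer to $t$, and in the latter situation we reduce to the $\le n-2$ case. If it lands at $\bar t$, reroute using the antipode analysis above, paying one extra step, which still fits in $n-2$ since $n \ge 4$ gives slack. I expect the main obstacle to be the source sub-case of the antipode analysis: ensuring that within $n-2$ steps from the global source we always hit a vertex with a strictly smaller reachmap, which is where acyclicity and the facet-sink structure of $w^1, w^2$ must be combined carefully, and where the hypothesis $n\ge 4$ (equivalently, ruling out the single exceptional $3$-cube of Figure~\ref{fig:cubes}b) is actually used.
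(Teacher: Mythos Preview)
Your case $|v\oplus t|\le n-2$ is correct and covers all but the $n+1$ vertices at Hamming distance $\ge n-1$ from the sink. The remaining two cases, however, do not go through as written. For $|v\oplus t|=n-1$ you assert that $v$ is a facet sink, but this is a confusion: $|v\oplus t|=n-1$ is a positional condition (\emph{$v$ is a neighbour of the antipode $\bar t$}), whereas being a facet sink means $|s_\psi(v)|\le 1$; these are unrelated, and in general such a $v$ can have many outgoing edges. For the antipode $\bar t$, your non-source sub-case is self-contradictory: you want to take an \emph{outgoing} first step to $v'=\bar t\oplus\{j\}$, yet you then require the edge on $j$ at $\bar t$ to be \emph{incoming}. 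If instead $j\notin s_\psi(\bar t)$, the edge goes $v'\to\bar t$, so $j\in s_\psi(v')\subseteq r_\psi(v')$ and there is no directed step $\bar t\to v'$; either way the argument fails. In the source sub-case, ``we can never return to $\bar t$'' does not prevent the two coordinates already crossed from reappearing in $r_\psi(v'')$: after $\bar t\xrightarrow{1} a\xrightarrow{2} v''$ one may well have $v''\xrightarrow{3} c$ with $1\in s_\psi(c)$. So the hard cases are genuinely open in your plan.

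The paper's proof avoids these difficulties by choosing targets other than $t$. It normalises so that $w^0,w^1\in\{\emptyset,\{1\}\}$ and $w^2=\{2\}$, notes that each of $\emptyset,\{1\},w^2$ is the sink of a specific facet or $(n-2)$-face with reachmap contained in $\{1\}$ or $\{1,2\}$, and then routes every $v\notin\{w^0,[n]\}$ to one of these three according to which of the coordinates $1,2$ lie in $v$; the distance bound $n-2$ falls out immediately from the face dimensions. For the single remaining vertex $[n]$ it locates the first CUN vertex $w^m$ that is \emph{not} a neighbour of $w^0$; one checks that $w^m=\{k,j\}$ has outmap $\subseteq\{k,j\}$ and reachmap $\subseteq\{1,k,j\}$, hence is the sink of the $(n-2)$-face $F_{[n]\setminus\{k,j\},w^m}$ containing $[n]$, and for $n\ge 4$ the path from $[n]$ to $w^m$ uses a coordinate outside $\{1,k,j\}$. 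The missing idea in your attempt is precisely this: rather than trying to reach $t$ itself, you should aim for one of the low-CUN vertices $w^1,w^2$ (or $w^m$), each of which is the sink of a suitably low-dimensional face that contains $v$.
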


Consider the vertices $w^0$ and $w^1$ and let $e$ be the edge that connects them. 
Let $w\in e$ be the unique out-neighbor of $w^2$ and $w'$ the other vertex in
$e$. W.l.o.g.\ assume $w=\emptyset, w'=\{1\}$ and  $w^2=\{2\}$.
The situation can be depicted as:
\begin{center}
\vspace{-0.1cm}
\includegraphics[scale=1]{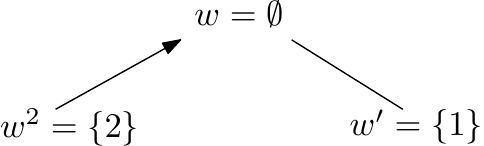}
\end{center}

These three vertices have no outgoing edges to other vertices. 
Their outmaps and reachmaps are summarized in the table below. 
\vspace{-0.1cm}
\begin{center}
\begin{tabular}{l|l|l|l} 
vertex & outmap & reachmap & is sink of the facet \\ \hline
$w = \emptyset$ & $\subseteq \{1\}$ & $\subseteq \{1\}$ & $F_{[n]\setminus\{1\},w}$ \\
$w' = \{1\}$ & $\subseteq \{1\}$ & $\subseteq \{1\}$ & $F_{[n]\setminus\{1\},w'}$\\
$w^2 = \{2\}$ & $=\{2\}$ & $\subseteq \{1,2\}$ & $F_{[n]\setminus\{2\},w^2}$
\end{tabular}
\end{center}
%\vspace{-0.1cm}
More precisely, the reachmap of $w^2$ is $\{2\}$ if $w=w^0$, and it is
$\{1,2\}$ if $w=w^1$. %From this table, we immediately get the following.

%\begin{lemma} With $w,w'$ as above,
%\begin{itemize}
%\item[(i)] $w=\emptyset$ is the sink of the facet $F_{[n]\setminus\{1\},w}$. 
%\item[(ii)] $w'=\{1\}$ is the sink of the facet $F_{[n]\setminus\{1\},w'}$. 
%\item[(iii)] $w^2=\{2\}$ is the sink of the face $F_{[n]\setminus\{1,2\},w^2}$.
%\end{itemize}
%\end{lemma}

\begin{lemma}
  With $w,w'$ as above, let $v \in Q^n \setminus \{w^0,[n]\}$. Then $v$ is $(n-2)$-covered by
some vertex in  $\{w,w',w^2\}$.
\end{lemma}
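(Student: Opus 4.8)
The plan is to play the three candidate targets off against each other using two structural facts recorded just above the lemma: each of $w=\emptyset$, $w'=\{1\}$, $w^2=\{2\}$ is the sink of a full facet (namely $w$ of $\{u:1\notin u\}$, $w'$ of $\{u:1\in u\}$, $w^2$ of $\{u:2\in u\}$), and each has a reachmap of size at most two ($r_\psi(w),r_\psi(w')\subseteq\{1\}$ and $r_\psi(w^2)\subseteq\{1,2\}$, with $r_\psi(w^2)=\{2\}$ when $w=w^0$ and $r_\psi(w^2)=\{1,2\}$ when $w=w^1$). By Lemma~\ref{lem:pathtosink}, a facet-sink is reached from every vertex of its facet along a shortest directed path, so bounding the directed distance to a target reduces to bounding a Hamming distance. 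On the reachmap side I would use three elementary facts: $r_\psi(x)\supseteq x\oplus w^0$ (stated in the preliminaries, $w^0$ being the global sink); $r_\psi(x)=\emptyset$ iff $x=w^0$; and $x\rightsquigarrow y\Rightarrow r_\psi(y)\subseteq r_\psi(x)$. Recall also that $\{w,w'\}=\{w^0,w^1\}$, so exactly one of $w,w'$ is the global sink and the other is $w^1$.

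First I would dispose of the two targets that still need covering themselves, $w^1$ and $w^2$. Since $w^1\to w^0$, the vertex $w^1$ is $1$-covered by $w^0\in\{w,w'\}$, because $r_\psi(w^0)=\emptyset\subsetneq r_\psi(w^1)$. Since $w^2\to w$, the vertex $w^2$ is $1$-covered by $w$: here $r_\psi(w)\subseteq r_\psi(w^2)$ by the reachability fact, and $2\in r_\psi(w^2)\setminus r_\psi(w)$ because $r_\psi(w)\subseteq\{1\}$.

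For the remaining $v\notin\{w^0,w^1,w^2,[n]\}$ I would branch on whether $1\in v$. If $1\in v$, the target is $w'$: Lemma~\ref{lem:pathtosink} gives $d(v,w')=|v|-1\le n-2$ (as $v\ne[n]$); if $w'=w^0$ then $r_\psi(w')=\emptyset\subsetneq r_\psi(v)$ since $v\ne w^0$, and if $w'=w^1$ then $w^0=w=\emptyset$, so $r_\psi(v)\supseteq v\oplus w^0=v$, which contains $1$ and, as $v\ne\{1\}=w^1$, also some other coordinate, hence $r_\psi(w')\subseteq\{1\}\subsetneq r_\psi(v)$. If $1\notin v$ and $v\ne[n]\setminus\{1\}$, the symmetric argument with target $w$ works: $d(v,w)=|v|\le n-2$, and either $w=w^0$ (with $v\ne w^0$) or $w=w^1$, in which case $w^0=w'=\{1\}$ and $r_\psi(v)\supseteq v\oplus\{1\}=v\cup\{1\}\supsetneq\{1\}$ because $v\ne\emptyset=w^1$.

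What is left, and what I expect to be the only non-routine point, is the single vertex $v=[n]\setminus\{1\}$, which lies at Hamming distance $n-1$ from both $w$ and $w'$, one more than allowed, and so escapes the scheme above; here I would route it to $w^2$. Since $2\in v$, the vertex $v$ lies in the facet of $w^2$, so $d(v,w^2)=|v\oplus w^2|=n-2$. For the reachmap, $r_\psi(v)\supseteq v\oplus w^0$ equals $[n]\setminus\{1\}$ if $w^0=\emptyset$ (and then $r_\psi(w^2)=\{2\}$) and equals $[n]$ if $w^0=\{1\}$ (and then $r_\psi(w^2)=\{1,2\}$), so in both cases $r_\psi(w^2)\subsetneq r_\psi(v)$ since $n\ge4$. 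This exhausts all $v\ne w^0,[n]$. The delicate step is recognizing that $[n]\setminus\{1\}$ is exactly the one vertex for which the natural facet-sink target violates the length bound, and that the spare target $w^2$ simultaneously restores both the distance bound and the strictness of the reachmap inclusion; the remaining effort is just bookkeeping over the two sub-cases for which of $w,w'$ is the global sink, a split that seems unavoidable given the asymmetry of the configuration.
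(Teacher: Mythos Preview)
Your proof is correct, but it takes a somewhat different route than the paper's. The paper decomposes according to \emph{both} coordinates $1$ and $2$: if $1\notin v$ and $2\notin v$, route to $w$; if $1\in v$, route to $w'$; if $2\in v$ and $1\notin v$, route to $w^2$. With this three-way split, the Hamming-distance bound $\le n-2$ falls out uniformly (in the first case because two coordinates are missing, in the third because the relevant face $F_{[n]\setminus\{1,2\},w^2}$ has dimension $n-2$), so there is no exceptional vertex to single out. Your split on coordinate~$1$ alone forces you to treat $[n]\setminus\{1\}$ separately and only then reach for $w^2$.

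The second, and perhaps more instructive, difference is in the strict-inclusion argument. You invoke $r_\psi(v)\supseteq v\oplus w^0$ and then case-split on which of $w,w'$ is the global sink, which is correct but doubles the bookkeeping. The paper instead observes that any coordinate traversed on the shortest directed path from $v$ to the chosen target lies in $r_\psi(v)$ (trivially) but not in the target's reachmap (because the path stays in a face avoiding $\{1\}$, resp.\ $\{1,2\}$). This single observation handles all cases uniformly, without ever needing to know which of $w,w'$ is $w^0$. Your argument works, but the paper's path-coordinate trick is worth internalizing: it is the natural way to witness reachmap shrinkage whenever you route to a face-sink.
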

\begin{proof}
Vertex $w^1$ is covered by $w^0$ and $w^2$ by $w^0$ or $w^1$,
so assume that $v$ is some other vertex.

 If $v$ neither contains $1$
nor $2$, then $v$ is in the facet $F_{[n]\setminus\{1\},w}$. Hence,
$d(v,w) = |v\oplus w|\leq n-2$. This is because $F_{[n]\setminus\{1\},w}$
is $(n-1)$-dimensional and $2 \notin v$.
Any coordinate that is part of the corresponding path is in
the reachmap of $v$ but not of $w$ (whose reachmap is a subset of
$\{1\}$). Hence, $v$ is $(n-2)$-covered by $w$.

If $v$ contains $1$, then $v$ is in the facet
$F_{[n]\setminus\{1\},w'}$, and $|v\oplus w'|\leq n-2$ since $v\neq[n]$.
As before, this implies that $v$ is $(n-2)$-covered by the sink $w '$
of the facet in question.

Finally, if $v$ contains $2$ but not $1$, then $v$ is in the face
$F_{[n]\setminus\{1,2\},w^2}$, and $d(v,w^2)\leq n-2$. Again, any
coordinate on a directed path from $v$ to $w^2$ within this face proves
that $v$ is $(n-2)$-covered by the sink $w^2$ of the face. 
\end{proof}

It remains to $(n-2)$-cover the vertex $v=[n]$. Let $m>2$ be the smallest 
index such that $w^m$ is not a neighbor of $w$, and assume
w.l.o.g. that $w^k=\{k\}, 3\leq k<m$. We have $w^k\rightarrow w$ for
all these $k$ by the vertex ordering. Furthermore, all other edges  
incident to $w^k$ are incoming. We conclude that each 
$w^k, 3\leq k<m$ has outmap equal to $\{k\}$
and, hence, is a facet sink. The reachmap of each such $w^k$ is $\subseteq \{1,k\}$. 
The situation is depicted as:
%For the same reason, all edges
%to other adjacent vertices of the $w_i$ are incoming, meaning that
%each $w_i, 3\leq i<m$ has both outmap and reachmap equal to $\{i\}$
%and hence is a facet sink.
\begin{center}
\includegraphics[scale=1]{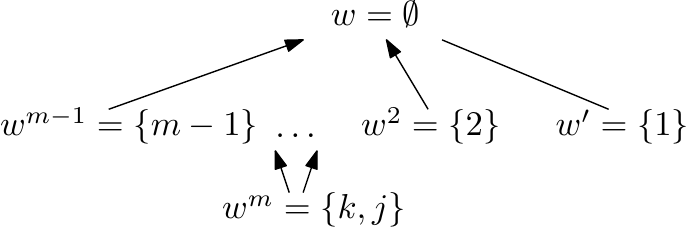}
\end{center}
%\vspace{-0.2cm}
Since $w^m$ has at least one out-neighbor in $\{w',w^2,\ldots,w^{m-1}\}$,
we know that $w^m=\{k,j\}$ for some $k<j\in[n]$. Moreover, the vertex
ordering again implies that the outgoing edges of $w^m$ are exactly
the ones to its (at most two) neighbors among $w',w^2,\ldots,w^{m-1}$.
Taking their reachmaps into account, we conclude that the reachmap of
$w^m$ is $\subseteq \{k,j,1\}$. 
%where the extra $1$ comes in if $2\in\{k,j\}$ and $w^2$ has reachmap $\{1,2\}$.

\begin{lemma}
With $w^m$ as above and $n\geq 4$, $v=[n]$ is $(n-2)$-covered by $w^m$.
\end{lemma}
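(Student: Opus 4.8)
The plan is to verify directly the two conditions in the definition of $(n-2)$-covering: that $r_\psi(w^m)\subsetneq r_\psi([n])$, and that $d([n],w^m)\le n-2$.

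For the reachmap inclusion, I would first pin down both sides. Since the global sink is $w^0=\emptyset$ and $r_\psi(v)\supseteq v\oplus t$ for the global sink $t$, we get $r_\psi([n])\supseteq[n]\oplus\emptyset=[n]$, hence $r_\psi([n])=[n]$. On the other side, the discussion preceding the lemma already gives $r_\psi(w^m)\subseteq\{1,k,j\}$: the only out-neighbours of $w^m=\{k,j\}$ are among $\{k\}$ and $\{j\}$ (these are the only neighbours of $w^m$ whose $\phi$-value is below $m$), their reachmaps lie in $\{1,k\}$ and $\{1,j\}$ respectively, and $s_\psi(w^m)\subseteq\{k,j\}$. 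Because $n\ge 4$, the set $\{1,k,j\}$ has at most three elements and is therefore a \emph{proper} subset of $[n]=r_\psi([n])$. This is precisely where the hypothesis $n\ge 4$ is used.

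For the distance bound, the key observation is that $w^m$ is the sink of the $(n-2)$-dimensional face $F_{[n]\setminus\{k,j\},w^m}$ — the face consisting of all vertices that contain both $k$ and $j$. Indeed $s_\psi(w^m)\subseteq\{k,j\}$ is disjoint from $[n]\setminus\{k,j\}$, so $w^m$ has no edge leaving within this face, i.e.\ it is the face sink. Since $[n]$ belongs to this face, Lemma~\ref{lem:pathtosink} yields $d([n],w^m)=|[n]\oplus w^m|=n-2$.

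Combining the two items, $[n]$ is $(n-2)$-covered by $w^m$; together with the previous lemma this shows every non-sink vertex is $(n-2)$-covered and completes the proof of Theorem~\ref{thm:UBnice}. I do not expect any genuinely hard step here — the whole argument is assembling facts established just above — the only point needing care is the \emph{properness} of the reachmap inclusion, which is exactly what fails when $n=3$ (matching the non-$1$-nice $3$-AUSO of Figure~\ref{fig:cubes}b, where $\{1,k,j\}$ can exhaust $[n]$).
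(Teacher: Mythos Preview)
Your argument is essentially right and follows the same structure as the paper's, but there is one genuine slip. You assert that the global sink is $w^0=\emptyset$ and hence $r_\psi([n])=[n]$. The setup does \emph{not} give this: the w.l.o.g.\ assumption is only $w=\emptyset$, $w'=\{1\}$, where $\{w,w'\}=\{w^0,w^1\}$, and the paper explicitly leaves open which of the two is the global sink (``the reachmap of $w^2$ is $\{2\}$ if $w=w^0$, and it is $\{1,2\}$ if $w=w^1$''). If $w^0=\{1\}$, then a priori you only get $r_\psi([n])\supseteq[n]\oplus\{1\}=\{2,\ldots,n\}$, and your cardinality comparison $|\{1,k,j\}|\le 3<n$ no longer immediately yields proper containment when $n=4$.

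The paper avoids this issue entirely: instead of computing $r_\psi([n])$, it notes that the path of length $n-2$ from $[n]$ to $w^m$ runs inside $F_{[n]\setminus\{k,j\},w^m}$, so it uses at least $n-2\ge 2$ coordinates, all different from $k$ and $j$; hence at least one of them is also different from $1$, and that coordinate witnesses $r_\psi(w^m)\subsetneq r_\psi([n])$ directly. Your argument is easily repaired along the same lines (or by observing that in either case $\{2,\ldots,n\}\setminus\{1,k,j\}\neq\emptyset$ for $n\ge 4$), but as written the justification for $r_\psi([n])=[n]$ is not valid.
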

\begin{proof}
We first observe that $w^m$ is the sink of the face
$F_{[n]\setminus\{k,j\}, w^m}$, since its outmap is $\subseteq \{k,j\}$.
Vertex $v=[n]$ is contained in this $(n-2)$-face, hence
there exists a directed path of length $d(v,w^m)=n-2$ from $v$ to
$w^m$ in this face. Since $n\geq 4$, the path spans at least two coordinates and
thus at least one of them is different from $1$. This coordinate proves that $v$ is
$(n-2)$-covered by $w^m$.
\end{proof}
To sum up, we have now proved that every $n$-AUSO, with $n \geq 4$,
is $(n-2)$-nice. All AUSO in one or two dimensions are 1-nice and the AUSO 
in three dimensions can be up to 2-nice (Figure~\ref{fig:cubes}b).
This concludes the upper bounds on the niceness of AUSO. % the proof of Theorem~\ref{thm:UBnice}.

\subsection{A matching lower bound for AUSO} \label{sec:lbAUSO}
In this section we prove the lower bound described in Theorem~\ref{thm:lbAUSO}, which is 
matching to the upper bound we proved in Theorem~\ref{thm:UBnice}. It is true that 
in a USO we can \emph{flip} any edge if the outmaps of the two vertices incident to it are the same 
(except the connecting coordinate) and still have a USO (Corollary 6, \cite{ss04}). 
We make use of this in the proof of the following theorem.

%We prove a lower bound on the niceness of acyclic USO that matches     
%the upper bound of Theorem~\ref{thm:UBnice}. It follows (Corollary 6, \cite{ss04})
%from the Hypersink Reorientation Lemma (Appendix~\ref{app:nicenessknown}) that in a USO we can \emph{flip} any edge 
%if the outmaps of the two vertices incident to it are the same (except the connecting coordinate).
%This gives rise to a particular family of USO, the Flip-Matching Orientations (FMO): those arise   
%when we start with a uniform orientation, e.g. all edges are forward, and we flip the edges 
%of an arbitrary matching. FMO have been studied in \cite{ss04} and \cite{ms06}.

\begin{theorem} \label{thm:lbAUSO}
There exists an $n$-AUSO $\psi$ which is not $i$-nice, for $i < n-2$. 
\end{theorem}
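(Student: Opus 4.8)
The plan is to construct an $n$-AUSO that mirrors the structure exposed in the proof of Theorem~\ref{thm:UBnice}, so that the vertex $[n]$ is forced to be covered only by a vertex at Hamming distance exactly $n-2$. The starting point is the uniform (forward) orientation $\psiu$, which is decomposable and hence $1$-nice, and I would then flip a carefully chosen set of edges — always edges whose two incident outmaps agree except on the connecting coordinate, so that by Corollary~6 of \cite{ss04} each flip preserves the USO property and, by a topological-sorting argument, acyclicity. The target is an AUSO in which the three lowest vertices $w^0,w^1,w^2$, together with the chain $w^3,\ldots,w^{n}$ of successive facet sinks, are arranged exactly as in the upper bound proof: each $w^k=\{k\}$ for $3\le k < n$ points only to $w=\emptyset$, and $w^n$ is (essentially) the unique vertex of the form $\{k,j\}$ with both $k,j$ large. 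This makes every vertex other than $[n]$ get $(n-2)$-covered cheaply, while $[n]$ can only reach a vertex of strictly smaller reachmap by travelling through the $(n-2)$-face $F_{[n]\setminus\{k,j\},w^n}$ down to $w^n$, a path of length $n-2$.

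Concretely, I would take the vertices $w=\emptyset$, $w'=\{1\}$, $w^2=\{2\}$, and $w^k=\{k\}$ for $3\le k\le n-1$ all as ``almost-sinks'': orient all edges incident to $w^k$ ($k\ge 2$) inward except the single edge $w^k\to w$, and orient the edge $w'\to w$. Every other edge keeps its forward-uniform direction, except the few needed so that $[n]$'s only escape route is long. The key invariant to verify is that, with this orientation, $r_\psi(w^k)\subseteq\{1,k\}$ for $3\le k\le n-1$ and $r_\psi(w)\subseteq\{1\}$, exactly as in Theorem~\ref{thm:UBnice}; then for the vertex $[n]$, any directed path out of $[n]$ either stays among high-index coordinates (reaching some $w^k$ after $\ge n-2$ steps) or immediately exits through coordinate $1$ into $F_{[n]\setminus\{1\}, \{1\}}$, whose sink $w'$ is again at distance $n-2$. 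In every case the first vertex with strictly smaller reachmap that $[n]$ can reach is at distance $\ge n-2$, so $\psi$ is not $(n-3)$-nice, i.e.\ not $i$-nice for $i<n-2$.

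The verification that $\psi$ is a USO is handled edge-by-edge via the flip corollary, starting from $\psiu$ and flipping in an order that keeps the partial orientation a USO at each step; acyclicity follows because each flipped edge, at the time it is flipped, connects two vertices whose outmaps differ only in the connecting coordinate, and one can maintain a consistent topological order throughout (the construction is, in spirit, the same kind of edge-flip construction used for the cyclic lower bound in Section~\ref{sec:lbUSO}, but here every flip is ``local'' and order-respecting so no cycle is created). The main obstacle I expect is making the reachmap bookkeeping airtight: I must be sure that no \emph{unintended} directed path lets $[n]$ — or some other vertex — reach a vertex of smaller reachmap in fewer than $n-2$ steps. In particular I need to check that none of $w^2,\ldots,w^{n-1}$ accidentally has an outgoing edge creating a shortcut, and that the reachmaps of the ``almost-sinks'' really are as small as claimed (each $w^k$ must fail to reach any coordinate other than $1$ and $k$). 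This amounts to confirming that the only edges I deviate from $\psiu$ are the ones described, and then arguing, as in the proof of Lemma~\ref{lem:pathtosink} / Figure~\ref{fig:cubes}, that these small reachmaps propagate correctly. Once that is established, the non-$(n-3)$-niceness of the vertex $[n]$ is immediate, since $d([n],u)\ge n-2$ for every $u$ with $r_\psi(u)\subsetneq r_\psi([n])=[n]$.
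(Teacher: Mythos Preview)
Your proposal has two genuine gaps that prevent it from going through.

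First, the construction is not actually specified, and what \emph{is} specified is inconsistent. You start from the forward uniform orientation (sink at $[n]$) but then want $w=\emptyset$ to be the global sink; that alone forces you to flip essentially every edge, not ``a few''. More concretely, if you make each $w^k=\{k\}$ a facet sink with outmap $\{k\}$ while keeping ``every other edge'' forward, then for any two indices $k\neq j$ the vertex $\{k,j\}$ must have outgoing edges to both $\{k\}$ and $\{j\}$ as well as all its forward edges, so $s_\psi(\{k,j\})=[n]$ for \emph{every} such pair --- contradicting that the outmap is a bijection. So the edge-by-edge flip sequence you envision cannot land at the described orientation. Relatedly, your acyclicity claim (``one can maintain a consistent topological order throughout'') is simply false in general: the flip corollary preserves the USO property but not acyclicity, as the paper explicitly warns (the cyclic USO in Figure~\ref{fig:cubes}c is an FMO obtained by exactly such flips). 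Acyclicity must be verified separately, and you give no argument.

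Second, and more fundamentally, the logical direction of your reachmap argument is inverted. To show the witness vertex is not $(n-3)$-covered you must prove that \emph{every} vertex reachable from it in at most $n-3$ steps has full reachmap $[n]$. You instead spend your effort engineering a handful of vertices $w^k$ with \emph{small} reachmap; but those come essentially for free (in the backward-uniform orientation the singletons already have outmap $\{k\}$). The real work --- which you flag as ``the main obstacle'' but then do not do --- is forcing full reachmaps on an entire neighborhood of the witness. The paper's proof attacks exactly this: it takes $\emptyset$ as the witness (sink at $[n]$), builds a single vertex $v\in Q^n_{n-3}$ with $r_\psi(v)=[n]$ via a carefully designed backward path spanning all coordinates (Figure~\ref{fig:ausolb}), and then flips one layer of edges on coordinate~$3$ so that every vertex in $\bigcup_{i=0}^{n-3}Q^n_i$ has a directed path to $v$ and hence full reachmap. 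Acyclicity is then checked by hand. Your plan to ``mirror the upper-bound structure'' does not provide this global full-reachmap property, and without it the proof cannot close.
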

\begin{proof}
Let $\psiu$ be the forward uniform orientation, i.e. the orientation where all edges are forward.  
We explain how to construct $\psi$, our target orientation, starting from $\psiu$. With $Q^n_k$ we denote 
the set of vertices that contain $k$ coordinates, i.e. $|Q^n_k| = \binom{n}{k}$. We assume $n\geq 4$.
The idea here is to construct an AUSO that has its source at $\emptyset$ and has the property 
that every vertex in $\bigcup_{i=0}^{n-3} Q^n_{i}$ has a full-dimensional reachmap.

Pick $v\in Q^n_{n-3}$ and assume w.l.o.g. that $v=[n]\setminus \{1,2,3\}$.  
Consider the 2-dimensional face $F_{\{1,2\}, v}$ and direct the edges in this face backwards. 
This is the first step of the construction and it results in $s_\psi(v) = \{3\}$. 

For the second step, consider the vertex $v' = [n] \setminus \{2\}$. We will flip $n-3$ edges in order to create
a path starting at $v'$.
%The figure below suggests which edges to flip: 
First, we flip edge $F_{\{4\}, [n]\setminus \{2\}}$. 
Then, for all $k \in \{4,\ldots, n-1\}$ we flip the edge $F_{\{k+1\}, [n]\setminus \{k\}}$.
This creates the path depicted in Figure~\ref{fig:ausolb}. 
\begin{figure}[htbp]
	\centering
	\includegraphics[width=\textwidth]{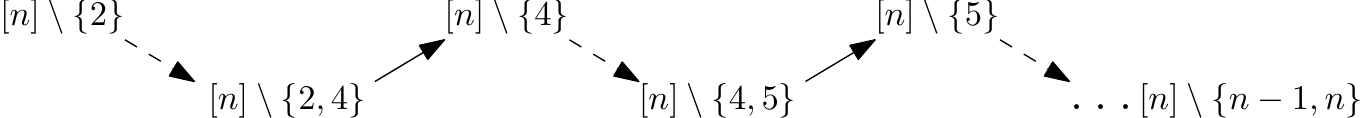}
	\caption{An illustration of the path starting at $v'$. The dashed edges are flipped backwards.} 
	\label{fig:ausolb}
\end{figure}

Let $U_3$ be the set of vertices $U_3 = \{u \in Q^n_{n-3} | 3 \in u \}$. 
That is all the vertices of $Q^n_{n-3}$ that contain the 3rd coordinate. 
For every $u \in U_3$ we flip the edge $F_{\{3\}, u}$ (that is the edge incident to $u$ on the 3rd coordinate). 
This is the third and last step of the construction of $\psi$.

\begin{claim} \label{cl:isuso}
$\psi$ is a USO.
\end{claim}
The first step of the construction is to flip the four edges in $F_{\{1,2\}, v}$. This is safe by considering that we first flip the two edges 
on coordinate 1; then, it is also safe to flip the two edges on coordinate 2.
All the edges reversed at the second step of the construction (Figure~\ref{fig:ausolb}) are between vertices in $Q^n_{n-1}$ and $Q^n_{n-2}$, and, in addition
those vertices are not neighbors to each other. Furthermore, all the edges reversed at the third step of the construction are on coordinate 3 and 
between vertices in $Q^n_{n-3}$ and $Q^n_{n-4}$. Thus, all these edge flips are safe. Note however that edge flips do not necessarily maintain
acyclicity (e.g. the cyclic USO in Figure~\ref{fig:cubes}c is an FMO); 
%All the edges we flip are safe by Corollary 6 in \cite{ss04}. Note, however that Corollary 6 does not guarantee 
%something about the creation of cycles; 
we have to verify acyclicity in a different way. 

\begin{claim} \label{cl:nocycle}
There is no cycle in $\psi$. 
\end{claim}
Clearly, a cycle has at least one backward and one forward edge in every coordinate it contains.
Thus, 
there cannot be a cycle that involves coordinate 3 because no backward edge on a 
different coordinate, has a path connecting it to a backward edge on coordinate 3.
%Clearly, a cycle has at least one backward and one forward edge in every coordinate it contains.

Consider the facet $F_{[n]\setminus \{3\}, [n]}$ and the USO $\psi'$, resulting from restricting $\psi$ to the aforementioned facet. We can notice that $\psi'$ is an FMO
and the only backward edges are the ones attached to the path illustrated in Figure~\ref{fig:ausolb}. Thus, a cycle has to use a part of this path. However, 
this path cannot be part of any cycles: a vertex on the higher level (vertices in $Q^n_{n-1}$) of the path has only two outgoing edges; one to the sink $[n]$ and one 
to the next vertex on the path. A vertex on the lower level $Q^n_{n-2}$ has only one outgoing edge to the next vertex on the path. Also, the last vertex of the
path $[n] \setminus  \{n-1,n\}$ has only one outgoing edge to $[n]\setminus \{n\}$ which has only one outgoing edge to the sink $[n]$.
%In the series of vertices 
%illustrated in Figure~\ref{fig:ausolb}, there is no repeated vertex. The last vertex in the Figure is $v'' = [n]\setminus \{n-1,n\}$, which has two edges incident to higher %vertices, namely $F_{\{n-1\}, v''}$ and $F_{\{n\}, v''}$. However,
%$F_{\{n\}, v''}$ is flipped backwards and thus the only forward out-edge incident to $v''$ is $F_{\{n-1\}, v''}$
%($v''$ is the sink of facet $F_{[n]\setminus \{n-1\}, v''}$). 
%We have $v'' \xrightarrow{n-1} [n] \setminus {\{n\}}$. 
%The vertex $[n] \setminus {\{n\}}$ has no backwards edge incident to it and thus there is no cycle in $\psi'$. 

The fact that the facet $F_{[n]\setminus \{3\}, v}$ has no cycle follows from the observation that there are backwards edges only on two coordinates
which is not enough for the creation of a cycle (remember that in a USO a cycle needs to span at least three coordinates).
This concludes the proof of Claim~\ref{cl:nocycle}, which, 
combined with Claim~\ref{cl:isuso}, results in $\psi$ being an AUSO. 

\begin{claim}
Every vertex in $\bigcup_{i=0}^{n-3} Q^n_{i}$ has a full-dimensional reachmap.
\end{claim}
Firstly, we argue that $v$ has $r_\psi(v)=[n]$. We have $s_\psi(v) = \{3\} \subset r_\psi(v)$. Then, $v \xrightarrow{3} u = [n] \setminus \{1,2\}$ and $u$ has 
$s_\psi(u) = \{1,2\} \subset r_\psi(v)$. Vertex $u$ is such that $u \xrightarrow{1} v' = [n]\setminus \{2\}$; $v'$ is the 
beginning of the path described in Figure~\ref{fig:ausolb}. The backwards edges on this path
span every coordinate in $\{4,\ldots, n\}$. This implies that $r_\psi(v') = \{2,4,\ldots, n\}$ and, since there is a path from $v$ to $v'$,
$r_\psi(v') \subseteq r_\psi(v)$. Combined with the above, we have that $r_\psi(v)=[n]$.

Secondly, we argue that $\forall u \in Q^n_{n-3}$, $r_\psi(u)=[n]$. Vertex $v$ is the sink of the facet $F_{[n]\setminus \{3\}, v}$. 
It follows that every vertex in $Q^n_{n-3} \cap F_{[n]\setminus \{3\}, v}$ has a path to $v$ and thus has full dimensional reachmap.
The vertices in $U_3$ (defined earlier), which are the rest of the vertices in $Q^n_{n-3}$, have backward edges on coordinate 3 and thus 
have paths to $F_{[n]\setminus \{3\}, v}$. It follows that vertices in $U_3$ also have full dimensional reachmaps.

Any vertex in $\bigcup_{i=0}^{n-4} Q^n_{i}$ has a path to a vertex in $Q^n_{n-3}$ since there are outgoing forward edges 
incident to any vertex in $\psi$ (except the global sink at $[n]$). 
Thus, we have that $\forall u \in \bigcup_{i=0}^{n-3} Q^n_{i}$, $r_\psi(u) = [n]$ which proves the claim. 

Finally, we combine the three Claims to conclude that the lowest vertex $\emptyset$ can 
only be covered by a vertex in $Q^n_{n-2}$. Therefore, $\psi$ is \emph{not} $i$-nice 
for any $i < n-2$, which proves the theorem.
We include an example construction, for five dimensions, in Figure~\ref{fig:ausolbex} below.
\end{proof}
\vspace{-0.3cm}
\begin{figure}[htbp]
	\centering
	\includegraphics[scale=1]{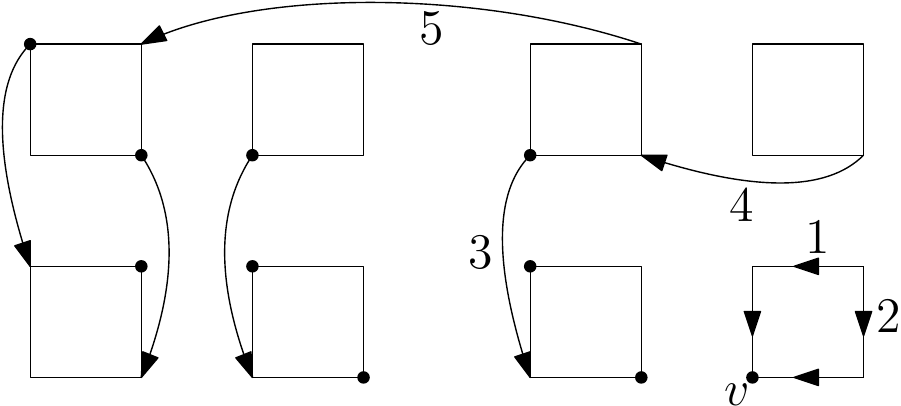}
	\caption{An example construction in 5 dimensions. Only the backward edges are noted. 
		Each coordinate is labeled over a backward edge.
		The 5-dimensional cube 
	is broken in 2-faces of coordinates 1,2. All the vertices in $Q^n_{n-3}$ are noted with dots. Also, $v$ is explicitly noted.} 
	\label{fig:ausolbex}
\end{figure}

\section{Fibonacci Seesaw revisited} \label{sec:FS}

In this section we motivate further the concept of reachmap with one more application. 
We introduce a variant of the Fibonacci Seesaw algorithm for solving USO 
(originally introduced in \cite{SW}). This new variant is interesting because the number of oracle calls it needs can be bounded 
by a function that is exponential to the size of the reachmap of the starting vertex, see Theorem~\ref{thm:FSv}.

Let $\psi$ be a USO. The Fibonacci Seesaw (FS) algorithm progresses by increasing a variable $j$ from 0 to $n-1$ while it maintains the following invariant:
There are two antipodal $j$-faces $A$ and $B$ of $Q^n$ that have their sinks $s_A$ and $s_B$ evaluated. For $j=0$ this means to evaluate two antipodal 
vertices. To go from $j=k$ to $j=k+1$ we take a coordinate $b \in s_\psi(s_A) \oplus s_\psi(s_B)$. Such a coordinate has to exist because 
of Lemma~\ref{lem:outmap}. Let $b \in s_\psi(s_A)$ and $b\notin s_\psi(s_B)$. Let $A'$ be the (k+1)-face that we get by extending $A$ with coordinate $b$ and $B'$ be the 
corresponding face from $B$.
We have that $s_B$ is the sink of $B'$. For $A'$ we need to evaluate the sink. But this will lie in the k-face $A'\setminus A$. Thus, for this step 
we need $t(k)$ evaluations, where $t(k)$ is the number of steps the FS needs to evaluate the sink of a $k$-USO.

When we reach two antipodal facets with $j=n-1$ the algorithm will terminate as either $s^j_A$ or $s^j_B$ will be the sink. The total cost of the algorithm 
is known to be $O(\alpha^n)$ vertex evaluations, where $\alpha < \phi$ is a constant slightly smaller than the golden ratio
(to have $\alpha$ strictly smaller than $\phi$ some further adjustments are needed; those can be found in \cite{SW}). 
Here, we consider the following algorithm:

\begin{algorithm}[H]
\caption{FS Revisited}
%\begin{algorithmic}
Set index $j=0$\;
%\State Pick a starting vertex $v^i \in Q^n$.
Pick a starting vertex $v^j \in Q^n$\;
Set evaluated coordinates $E^j = \emptyset$\;
\While{$s_\psi(v^j) \neq \emptyset$}{
	Pick $b \in s_\psi(v^j)$\;
	$v^{j+1} \leftarrow$ FS($F_{E^{j}, v^j \oplus \{b\} }$)\;
	$E^{j+1} \leftarrow E^{j} \cup \{b\}$\;
	$j \leftarrow j+1$\;
}
%\end{algorithmic}
\end{algorithm}
After the $j$th iteration, the above algorithm considers the sink $v^j$ of face $F_{E^{j}, v^j}$. Then, it expands the set of coordinates by adding a coordinate 
$b$ that is outgoing for $v^j$ and solving, using the Fibonacci Seesaw, the face $F_{E^{j}, v^j\oplus \{b\} }$. 
The result is the sink of the face
$F_{E^{j+1}, v^0}$ which becomes vertex $v^{j+1}$. The algorithm terminates when it has evaluated the global sink.

\begin{lemma}\label{lem:reachmap}
Let $\rho$ be the iteration in which Algorithm 1 terminates. Then,
\[
r_\psi(v^0) \supseteq r_\psi(v^1) \supseteq \ldots \supseteq r_\psi(v^\rho) \]
\end{lemma}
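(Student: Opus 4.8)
The plan is to show that each iteration of Algorithm 1 can only shrink (weakly) the reachmap, i.e. $r_\psi(v^{j+1}) \subseteq r_\psi(v^j)$ for every $j < \rho$, and then chain these inclusions. The crucial structural fact I would extract is this: at the end of iteration $j$, the vertex $v^{j+1}$ is the sink of the face $F_{E^{j+1}, v^0}$ (as stated in the paragraph preceding the lemma), and $v^j$ lies in that same face — indeed $v^j$ is the sink of the subface $F_{E^j, v^0}$ and $E^j \subseteq E^{j+1}$, so $v^j \in F_{E^{j+1}, v^0}$. By Lemma~\ref{lem:pathtosink}, since $v^{j+1}$ is the sink of the face $F_{E^{j+1}, v^0}$ and $v^j$ belongs to it, there is a directed path from $v^j$ to $v^{j+1}$, i.e. $v^j \rightsquigarrow v^{j+1}$.

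Once we have $v^j \rightsquigarrow v^{j+1}$, the inclusion $r_\psi(v^{j+1}) \subseteq r_\psi(v^j)$ follows directly from the definition of the reachmap: any coordinate reachable from $v^{j+1}$ (either in its own outmap, or in the outmap of some vertex reachable from $v^{j+1}$) is also reachable from $v^j$ by first following the path $v^j \rightsquigarrow v^{j+1}$ and then concatenating. Formally, if $j' \in r_\psi(v^{j+1})$ then either $j' \in s_\psi(v^{j+1})$, in which case $v^{j+1}$ witnesses $j' \in r_\psi(v^j)$ via the path $v^j \rightsquigarrow v^{j+1}$; or there is $u$ with $v^{j+1} \rightsquigarrow u$ and $j' \in s_\psi(u)$, and then $v^j \rightsquigarrow v^{j+1} \rightsquigarrow u$ gives $j' \in r_\psi(v^j)$. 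Applying this for $j = 0, 1, \ldots, \rho-1$ and transitivity of $\subseteq$ yields the displayed chain.

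I do not expect a serious obstacle here; the only point requiring a little care is making precise that $v^{j+1}$ really is the sink of $F_{E^{j+1}, v^0}$ — this is exactly what the FS subroutine computes on the face $F_{E^j, v^j \oplus \{b\}}$ together with the fact that $v^j$ is already the sink of the complementary subface $F_{E^j, v^j}$, so the unique sink of the $(|E^j|+1)$-dimensional face $F_{E^{j+1}, v^0}$ lies in one of these two halves and is correctly identified. Given that, the argument is just the reachmap-monotonicity-along-directed-paths observation applied $\rho$ times.
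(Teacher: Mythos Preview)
Your proposal is correct and follows essentially the same route as the paper: both arguments rest on the observation that $v^{j+1}$ is the sink of the face $F_{E^{j+1}, v^0}$ containing $v^j$, so Lemma~\ref{lem:pathtosink} yields a directed path and hence reachmap containment. If anything, your version is slightly more explicit---you prove $v^j \rightsquigarrow v^{j+1}$ step by step and chain, whereas the paper only spells out $v^0 \rightsquigarrow v^j$ and leaves the intermediate inclusions implicit.
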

\begin{proof}
%Consider $s^0_A$ and w.l.o.g. assume that the first coordinate considered by the algorithm is $b\in s_\psi(s^0_A)$. Then 
Consider $v^j$ for any $j$ and let $j \in s_\psi(v^j)$ be the next coordinate that the algorithm will consider. 
In the next step, we will have $v^{j+1}$ which will be the sink of face $F_{E^{j}, v^j\oplus \{b\} }$, for some $b\in s_\psi(v^j)$. We have that 
 $F_{E^{j}, v^j\oplus \{b\} } \subset F_{E^{j+1}, v^j }$, for every $j \leq \rho$. In particular, we have 
 $F_{E^{j}, v^j\oplus \{b\} } \subset F_{E^{j+1}, v^0}$, for every $0 < j \leq \rho$. This meas that there is a path $v^0 \rightsquigarrow v^j$, for every 
$0 < j \leq \rho$. The lemma follows.
\end{proof}

Using the above lemma, we can upper-bound the number of iterations of Algorithm 1 in terms of the reachmap of the
starting vertex. 

\begin{lemma}\label{lem:iterations}
Let $\rho$ be the iteration in which the algorithm terminates. Then, $\rho \leq |r_\psi(v^0)|$.
\end{lemma}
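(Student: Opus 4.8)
The plan is to combine Lemma~\ref{lem:reachmap} with the observation that every iteration of Algorithm~1 strictly shrinks the reachmap. First I would set up the bookkeeping: after iteration $j$, the vertex $v^{j}$ is the sink of the face $F_{E^{j},v^{0}}$, and $E^{j}$ consists of exactly $j$ coordinates (one fresh coordinate $b$ is added per iteration, and $b \in s_\psi(v^{j})$ guarantees $b \notin E^{j}$, since $v^{j}$ is the sink of $F_{E^{j},v^{0}}$ and hence has no outgoing edge inside that face). So the sequence $E^{0} \subsetneq E^{1} \subsetneq \cdots \subsetneq E^{\rho}$ is strictly increasing, which already gives $\rho \le n$, but that is not yet what we want.

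The key step is to show $r_\psi(v^{j+1}) \subsetneq r_\psi(v^{j})$, i.e. the inclusions in Lemma~\ref{lem:reachmap} are \emph{proper} (as long as $v^{j}$ is not the global sink). I would argue as follows: by Lemma~\ref{lem:reachmap} we already have $r_\psi(v^{j+1}) \subseteq r_\psi(v^{j})$. For properness, pick the coordinate $b \in s_\psi(v^{j})$ chosen in iteration $j$. Then $b \in s_\psi(v^{j}) \subseteq r_\psi(v^{j})$. On the other hand, $v^{j+1}$ is the sink of the face $F_{E^{j+1},v^{0}}$, and $b \in E^{j+1}$; so $b \notin s_\psi(v^{j+1})$, and moreover no vertex reachable from $v^{j+1}$ can have $b$ in its outmap — any such vertex lies in $F_{E^{j+1},v^{0}}$ (reachability stays within the smallest face containing both endpoints, and $r_\psi(v^{j+1}) \subseteq E^{j+1}$ will follow as below), and within that face $v^{j+1}$ is the sink, so no edge on coordinate $b$ points back toward a vertex whose $b$-edge is outgoing. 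Hence $b \notin r_\psi(v^{j+1})$, giving the strict inclusion.

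To make the ``reachability stays within the face'' argument clean, I would first record the general fact that $r_\psi(v^{j}) \subseteq E^{j}$: indeed $v^{j}$ is the sink of $F_{E^{j},v^{0}}$, so by Lemma~\ref{lem:pathtosink} every vertex in that face reaches $v^{j}$ but $v^{j}$ reaches only vertices whose symmetric difference with $v^{j}$ is spanned by the \emph{complement} coordinates of $E^{j}$ — wait, more carefully: from the sink of a face there is no outgoing edge within the face, so any directed path out of $v^{j}$ immediately leaves along a coordinate in $[n]\setminus E^{j}$; thus $s_\psi(v^{j}) \subseteq [n]\setminus E^{j}$, but that does not immediately bound the full reachmap. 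The right statement is rather that the reachmap of $v^{j}$ is contained in $E^{j}$ only after we know $v^{j}$ is a sink of a face and the construction forces outgoing coordinates to be the ones not yet fixed; I would instead directly use that $b \in E^{j+1}$ and $v^{j+1}$ is the sink of $F_{E^{j+1}, v^0}$, so among the edges of $v^{j+1}$ on coordinates in $E^{j+1}$ none is outgoing, and then show by the standard face-containment argument (as in Lemma~\ref{lem:pathtosink}) that any directed path from $v^{j+1}$ reaching a vertex with $b$ in its outmap would have to re-enter and violate sinkness. This face-containment bookkeeping is the main obstacle — it is where one must be careful about which coordinates a directed path out of $v^{j+1}$ can touch.

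Once the strict inclusion $r_\psi(v^{j+1}) \subsetneq r_\psi(v^{j})$ is established for each $j < \rho$, the conclusion is immediate: the $\rho+1$ sets $r_\psi(v^{0}), r_\psi(v^{1}), \ldots, r_\psi(v^{\rho})$ form a strictly decreasing chain of subsets of $r_\psi(v^{0})$, so the chain has length at most $|r_\psi(v^{0})|+1$, i.e. $\rho \le |r_\psi(v^{0})|$. I expect the write-up to be short: one paragraph for the ``fresh coordinate'' bookkeeping, one for the strict-inclusion claim, and one line for the counting conclusion.
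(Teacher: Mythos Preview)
Your approach has a genuine gap: the strict inclusion $r_\psi(v^{j+1}) \subsetneq r_\psi(v^{j})$ is simply \emph{false} in general, so no amount of face--containment bookkeeping will rescue it. A concrete counterexample is the cyclic $n$-USO constructed in Theorem~\ref{thm:usolb} (the same phenomenon occurs for Morris USOs): there every vertex except the global sink has reachmap equal to $[n]$. Running Algorithm~1 from any non-sink $v^{0}$ therefore gives $r_\psi(v^{0}) = r_\psi(v^{1}) = \cdots = r_\psi(v^{\rho-1}) = [n]$, i.e.\ every inclusion in Lemma~\ref{lem:reachmap} is an equality until the very last step. In particular the coordinate $b$ chosen in iteration $j$ \emph{does} lie in $r_\psi(v^{j+1})$. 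Your intuition that ``reachability stays within the face'' breaks down exactly where you suspected: since $v^{j+1}$ is the sink of $F_{E^{j+1},v^{0}}$, its outgoing edges are precisely on coordinates \emph{outside} $E^{j+1}$, so every directed path from $v^{j+1}$ immediately leaves that face, and once outside nothing prevents it from reaching a vertex with $b$ outgoing.

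The paper's argument sidesteps this entirely by tracking the \emph{growing} set $E^{j}$ rather than the (not necessarily shrinking) reachmaps. The only consequence of Lemma~\ref{lem:reachmap} that is needed is that each chosen coordinate satisfies $b \in s_\psi(v^{j}) \subseteq r_\psi(v^{j}) \subseteq r_\psi(v^{0})$; hence $E^{j} \subseteq r_\psi(v^{0})$ for every $j$. Since $|E^{j}| = j$ (your first paragraph already established this), we cannot have $j > |r_\psi(v^{0})|$, and when $E^{j} = r_\psi(v^{0})$ the vertex $v^{j}$ is the sink of $F_{r_\psi(v^{0}),v^{0}}$, which contains the global sink and therefore equals it. That is the whole proof.
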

\begin{proof}
After iteration $j$, Algorithm 1 has computed $v^j$ which is the sink of a $j$-face. 
We have argued that $r_\psi(v^0) \supseteq r_\psi(v^j)$ for every 
$0 < i \leq \rho$ at Lemma~\ref{lem:reachmap}. This means that the coordinate we pick at any iteration is in the reachmap of $v^0$. In addition,
the set of coordinates $E^j$ grows at every iteration. 
If $E^j = r_\psi(v^0)$ (or equivalently $j = |r_\psi(v^0)|$), then $v^j$ will be the sink of the face $F_{r_\psi(v^0), v^0}$. This means that 
$v^j$ will be the global sink of $\psi$, i.e. $s_\psi(v^j)=\emptyset$. Of course, it might happen 
that $v^j$ is the global sink for $E^j \subset r_\psi(v^0)$;
hence, the inequality. 
\end{proof}

\begin{theorem} \label{thm:FSv}
Algorithm 1, when run on an $n$-USO $\psi$ with starting vertex $v^0 \in Q^n$,
 needs at most $O(\alpha^\rho)$ vertex evaluations, where $\alpha < \phi$ and $\rho = |r_\psi(v^0)|$.
\end{theorem}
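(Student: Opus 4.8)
The plan is to combine the two preceding lemmas (Lemma~\ref{lem:reachmap} and Lemma~\ref{lem:iterations}) with the known cost analysis of the Fibonacci Seesaw. First I would recall that by Lemma~\ref{lem:iterations} the main loop of Algorithm 1 runs for $\rho \leq |r_\psi(v^0)|$ iterations. In iteration $j$ (for $j = 0, 1, \ldots, \rho-1$) the algorithm makes exactly one call to the standard Fibonacci Seesaw on the face $F_{E^j, v^j \oplus \{b\}}$, which is a $j$-dimensional USO. Since the Fibonacci Seesaw solves a $k$-USO with $O(\alpha^k)$ vertex evaluations (for a constant $\alpha < \phi$, as established in \cite{SW}), the call in iteration $j$ costs $O(\alpha^j)$ evaluations, plus $O(1)$ for evaluating $v^j$ and reading off its outmap to pick $b$.

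Next I would simply sum the per-iteration costs. The total number of vertex evaluations is
\[
\sum_{j=0}^{\rho-1} O(\alpha^j) = O\!\left(\frac{\alpha^\rho - 1}{\alpha - 1}\right) = O(\alpha^\rho),
\]
where the last step uses that $\alpha > 1$ is a fixed constant, so the geometric series is dominated by its last term. Substituting $\rho = |r_\psi(v^0)|$ gives the claimed bound $O(\alpha^{|r_\psi(v^0)|})$.

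One subtlety worth addressing explicitly: the faces on which the Fibonacci Seesaw is invoked are genuine faces of the hypercube (of the form $F_{E^j, v^j \oplus \{b\}}$), and the restriction of a USO to a face is again a USO, so the Fibonacci Seesaw applies and its cost bound is valid on each of them. Also, the argument that $v^{j+1}$ (the sink returned by the Fibonacci Seesaw call) is indeed the sink of $F_{E^{j+1}, v^0}$ — needed to keep the invariant that each iteration starts from the sink of the current face — is already spelled out in the discussion preceding Lemma~\ref{lem:reachmap}, so I would just cite it.

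I do not expect any real obstacle here: the theorem is essentially a corollary of Lemma~\ref{lem:iterations} together with the geometric-sum bookkeeping, and the only thing to be slightly careful about is making sure the Fibonacci Seesaw is always run on a face of dimension exactly $j$ in iteration $j$ (not $j+1$), so that the exponents in the sum are $0, 1, \ldots, \rho-1$ rather than $1, \ldots, \rho$; either way the sum is $O(\alpha^\rho)$, so even an off-by-one in the dimension count does not affect the asymptotics.
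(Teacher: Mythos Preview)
Your proposal is correct and follows essentially the same argument as the paper: bound the number of iterations by $\rho$ via Lemma~\ref{lem:iterations}, note that the Fibonacci Seesaw call in iteration $j$ is on a $j$-dimensional face and hence costs $O(\alpha^j)$, and sum the resulting geometric series to obtain $O(\alpha^\rho)$. The paper's proof is slightly terser and sums up to $\rho$ rather than $\rho-1$, but as you already observed this off-by-one is irrelevant to the asymptotics.
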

\begin{proof}
The algorithm performs at most $\rho$ iterations. 
At each of them it calls the Fibonacci Seesaw to solve a face of $\psi$. In particular, 
when progressing from the $j$th to the $(j+1)$th iteration it calls the Fibonacci Seesaw to solve an $j$-face of $\psi$.
Thus, the number of vertex oracle calls of Algorithm 1 can be bound by
\[
\sum_{k=0}^{\rho} \alpha^k = \frac{a^{\rho+1}-1}{\alpha-1} = O(\alpha^\rho)
\]
where $1 < \alpha < \phi$ is the constant in the time bounds of the Fibonacci Seesaw algorithm, i.e. $\alpha \approx 1.61$.
\end{proof}

We have that Algorithm 1 is faster asymptotically than the Fibonacci Seesaw 
when the size of the reachmap of the starting vertex is small,
i.e. $|r_\psi(v^0)| \ll n$. We believe that similar variants as the one above (adding one coordinate at a time and 
recursively running the algorithm in question) would provide upper bounds where the reachmap of the starting vertex
is in the exponent, for every non-path-following algorithm. 

%Of course, it might be the case that $|r_\psi(v^0)| = n$. Thus in the worst case, Algorithm 1 has the same asymptotic
%complexity as the Fibonacci Seesaw. Also, note that there are examples of USO (e.g. the cyclic $n$-nice USO is such an example)
%where all vertices but the sink have full-dimensional reachmaps. 

\section{Conclusions} 
In this paper we study the reachmaps and niceness of USO, concepts introduced by Welzl \cite{emonice} in 2001. 
The questions that Welzl originally posed are now answered and the concepts explored further. 
We believe that these tools, or related ones, will prove useful in finally closing the gap between the 
lower and upper bounds known for RE. This will happen with either exponential lower bounds or 
with subexponential upper bounds. It is worth mentioning that these concepts are not    
only relevant for USO, but could also be defined on generalizations of USO, such as Grid USO \cite{raey} or  
Unimodal Numberings \cite{hammer}. 

The authors of \cite{hz16} define the concept of
a $(k,\ell)$-layered AUSO and use it to argue that their 
lower bounds are optimal under the method they use. Their concept is a generalization of niceness (on AUSO)
but the exact relationship remains to be discovered. 
They pose the following questions: Are there AUSO that are not $(2^{O(\sqrt{n \log n)}}, O(\sqrt{n / \log n}))$-layered?
Are there small constants $c,d$ such that all AUSO are $(c^n , dn/ \log n)$-layered?
We believe that the techniques of our proofs 
from Theorems~\ref{thm:UBnice} and~\ref{thm:lbAUSO} may be fruitful for answering these questions.

%\subparagraph*{Acknowledgements}
%We would like to thank Thomas Dueholm Hansen and Uri Zwick for sharing their work \cite{hz16} with us.  

\bibliography{uso}

\end{document}